%%%% Version 03.01.11
\documentclass[reqno,10pt]{amsart}
\usepackage{amsmath,amssymb,graphics,epsfig,color,cite}
%\usepackage[notcite,notref]{showkeys}
%%%NEWTHEOREMS
\newtheorem{theorem}{Theorem}[section]
\newtheorem{proposition}[theorem]{Proposition}
\newtheorem{lemma}[theorem]{Lemma}

\theoremstyle{remark}
\newtheorem{remark}[theorem]{Remark}
\theoremstyle{definition}
\newtheorem{definition}[theorem]{Definition}
%%%NUMBERING
\numberwithin{equation}{section}
\numberwithin{theorem}{section}
%%%%%TYPEFACE
\newcommand{\mc}[1]{{\mathcal #1}}
\newcommand{\bb}[1]{{\mathbb #1}}
\renewcommand{\epsilon}{\varepsilon}   
\newcommand{\sgn}{\mathop{\rm sgn}\nolimits}
\newcommand{\eps}{\epsilon}
\newcommand{\rme}{\mathrm{e}}
\newcommand{\rmd}{\mathrm{d}}
\newcommand{\bam}{\,\overline {\!m\!}\,}

\begin{document}

\title[Boundary effects in the gradient theory of phase
transitions]{Boundary effects in the \\ gradient theory of phase
  transitions}

\author [L.\ Bertini] {Lorenzo Bertini} 
\address{Lorenzo Bertini, Dipartimento di Matematica, SAPIENZA
  Universit\`a di Roma, P.le Aldo Moro 5, 00185 Roma, Italy}
\email{bertini@mat.uniroma1.it}

\author [P.\ Butt\`a] {Paolo Butt\`a}
\address{Paolo Butt\`a, Dipartimento di Matematica, SAPIENZA
  Universit\`a di Roma, P.le Aldo Moro 5, 00185 Roma, Italy}
\email{butta@mat.uniroma1.it}

\author [A.\ Garroni] {Adriana Garroni}
\address{Adriana Garroni, Dipartimento di Matematica, SAPIENZA
  Universit\`a di Roma, P.le Aldo Moro 5, 00185 Roma, Italy}
\email{garroni@mat.uniroma1.it}

\subjclass[2000]{
Primary
82B26,  %Phase transitions (general)
49J45;  %Methods involving semicontinuity and convergence; relaxation
Secondary
35J20,  %Variational methods for second-order elliptic equations
82B24.  %Interface problems; diffusion-limited aggregation
}

\keywords{Gradient theory of phase transitions, Development by $\Gamma$-convergence, Boundary layer.} 

\begin{abstract}
We consider the van der Waals' free energy functional, with scaling parameter $\eps$, in the plane domain $\bb R_+\times \bb R_+$, with inhomogeneous Dirichlet boundary conditions. We impose the two stable phases on the horizontal boundaries $\bb R_+ \times\{0\}$ and $\bb R_+\times\{\infty\}$, and free boundary conditions on $\{\infty\}\times\bb R_+$. Finally, the datum on $\{0\}\times \bb R_+$ is chosen in such a way that the interface between the pure phases is pinned at some point $(0,y)$. We show that there exists a critical scaling, $y=y_\eps$, such that, as $\eps\to 0$, the competing effects of repulsion from the boundary and penalization of gradients play a role in determining the optimal shape of the (properly rescaled) interface. This result is achieved by means of an asymptotic development of the free energy functional. As a consequence, such analysis is not restricted to minimizers but also encodes the asymptotic probability of fluctuations.
\end{abstract}

\maketitle
\thispagestyle{empty}

\section{Introduction}
\label{sec:1}

The van der Waals' theory of phase transitions \cite{ch,vdw} is based on
the functional
\begin{equation}
  \label{acf}
  E(u) = \int\Big[ \big| \nabla u\big|^2 +  V(u) \Big] \,\rmd r\,,
\end{equation}
where the scalar field $u=u(r)$, $r\in\bb R^d$, represents the local
order parameter and $V(u)$ is a smooth, symmetric, double well
potential whose minimum value, chosen to be zero, is attained at
$u_\pm$; we also assume $V''(u_\pm)>0$.  By introducing a scaling
parameter $\eps>0$, which is interpreted as the ratio between the
microscopic and the macroscopic scale, a most relevant issue is the
asymptotic behaviour of the sequence of functionals
\begin{equation}
  \label{acfeps}
   E_\eps(u) = \int\Big[ \eps\big| \nabla u \big|^2
   +  \frac1\eps V(u) \Big] \,\rmd r\,,
\end{equation}
in the sharp interface limit $\eps\to 0$.  
This has been first analyzed in \cite{mm} and extensively studied
afterwards, see \cite{al} for a review. The limiting functional turns
out to be finite only if $u$ is a function of bounded variation taking
values in $\{u_-,u_+\}$. For $u$ in this set, the limiting functional
is furthermore given by $C_V\, {\mc H}^{d-1}(\mc S_u)$, where $\mc S_u$ denotes the
jump set of $u$ and ${\mc H}^{d-1}(\mc S_u)$ is its $(d-1)$-dimensional Hausdorff measure. The surface energy density $C_V>0$ is finally given by
\begin{equation}\label{cv}
  C_V=\int_{u_-}^{u_+}2\sqrt{V(a)}\,\rmd a\,.
\end{equation}
For any given limiting configuration an optimal sequence can be
constructed by making the transition from the value $u_-$ to the value
$u_+$ in the direction $\nu$ orthogonal to the interface with a one
dimensional profile $\bam(\frac{r\cdot \nu}{\eps})$. Here
$\bam$, the so-called instanton, is the minimizer of the
one-dimensional van der Waals' energy \eqref{acf} with
boundary conditions $u_\pm$ at $\pm\infty$, satisfying $\bam(0)=0$.

As proven in \cite{modica}, when $E_\eps$ is considered together with
Dirichlet boundary conditions, the latter contribute to the limiting
functional with a term taking into account the discrepancy between the
pure phases $\{u_-,u_+\}$ in the interior of the domain and the
prescribed boundary data. We can regard this term as the cost
associated to an interface localized at the boundary. In particular,
when the boundary data take values in the pure phases $\{u_-,u_+\}$,
this cost coincides with the one in the bulk.

Consider a geometry in which the minimizer of the limiting functional
is obtained with an interface localized at the boundary. Of course,
when $\eps$ is small but strictly positive, the minimizer of $E_\eps$
is smooth and the transition between the pure phases takes place in a
thin layer close to the boundary. The purpose of the present paper is
a detailed description, in the two dimensional case, of such boundary
effect by means of an asymptotic development of $E_\eps$. In
particular, such analysis is not restricted to minimizers but also
encodes the asymptotic probability of fluctuations.

We consider the following geometry, see Figure~\ref{fig}. 
As basic domain we choose $\Omega^R=(0,R)\times(0,+\infty)$, $R>0$,
and denote by $x$ and $y$ the horizontal and the vertical coordinates,
respectively. We impose the phase $u_-$ on $(0,R)\times\{0\}$, the
phase $u_+$ on $(0,R)\times\{+\infty\}$, and free boundary conditions
on $\{R\}\times(0,+\infty)$.  Finally, the trace on $\{0\}\times
(0,+\infty)$ is given by a suitable (monotone) continuous function
$v_\eps\colon [0,+\infty)\to [u_-,u_+]$ satisfying $v_\eps(0)=u_-$,
$v_\eps(+\infty)=u_+$.

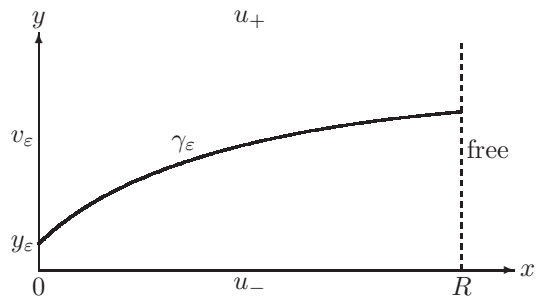
\begin{figure}[h]
\label{fig}
\begin{picture}(200,120)(0,0)
{
%
%assi
\thinlines
\put(0,0){\vector(1,0){180}}
\put(185,0){\makebox(0,0){$x$}}
\put(0,0){\vector(0,1){90}}
\put(0,95){\makebox(0,0){$y$}}
\put(0,-6){\makebox(0,0){$0$}}
\multiput(160,0)(0,4){22}{\line(0,1){2}}
\put(160,-6){\makebox(0,0){$R$}}
%
%boundary data
\put(80,-6){\makebox(0,0){$u_-$}}
\put(80,95){\makebox(0,0){$u_+$}}
\put(-6,50){\makebox(0,0){$v_\epsilon$}}
\put(170,46){\makebox(0,0){free}}
%
%interface
\thicklines
\qbezier(0,10)(40,50)(160,60)
\put(-6,10){\makebox(0,0){$y_\epsilon$}}
\put(55,48){\makebox(0,0){$\gamma_\epsilon$}}
}
\end{picture}
\caption{
  The domain $\Omega^R$ and the corresponding boundary conditions. 
  The zero of $v_\epsilon$ is $y_\epsilon$ and $\gamma_\epsilon$
  is the interface.
}
\end{figure}

We denote by $E_\eps(\cdot,\Omega^R)$ the functional in \eqref{acfeps}
on the domain $\Omega^R$ with these boundary conditions and let
$u^*_\eps$ be a minimizer of $E_\eps(\cdot,\Omega^R)$. Assume that
$v_\eps$ has a unique zero at $y_\eps$ and let $\gamma_\eps$ be the zero level set of $u^*_\eps$. Observe that $\gamma_\eps$ is a subset of the closure of $\Omega^R$. We shall refer to it as the \emph{interface} and in the following
heuristic discussion we assume that it is the graph of a function on
$[0,R]$, still denoted by $\gamma_\eps$. The boundary condition on
$\{0\}\times(0,+\infty)$ pins the interface at the point $(0,y_\eps)$,
i.e., $\gamma_\eps(0)= y_\eps$. 
We assume that $y_\eps$ converges to zero as $\eps\to 0$. The
result in \cite{modica} then implies that the interface approaches the
interval $[0,R]\times\{0\}$ in the limit $\eps\to 0$,
i.e., $\gamma_\eps\to 0$.
Our aim is a detailed analysis of this convergence, which includes the correction for
finite $\eps$ due to the boundary condition.
There are two competing effects. The boundary datum $u_-$ on
$(0,R)\times \{0\}$ effectively repels $\gamma_\eps$; indeed, in order
to minimize the energy along the one dimensional sections
$\{x\}\times(0,+\infty)$, $x\in (0,R)$, the zero of
$u^*_\eps(x,\cdot)$ should be as large as possible. On the other hand,
the convergence of $\gamma_\eps$ to the flat interface penalizes the
gradient of $\gamma_\eps$. We show that there exists a critical
scaling for $y_\eps$ such that $\gamma_\eps$, properly rescaled,
converges to a non trivial profile for which both effects play a role.

In the spirit of the so-called developments by $\Gamma$-convergence
\cite{ab,marito1,marito2}, we introduce the excess energy  
\begin{equation} 
  \widetilde{E}_\eps(u,\Omega^R)=K_\eps\big[E_\eps(u,\Omega^R)-C_VR\big]\,,
\end{equation}
and look for a sequence $K_\eps\to+\infty$ for which $\widetilde{E}_\eps$
has a non trivial limit. In order to complete this program, we need
however to properly rescale the variables.
The identification of the correct scaling is based on the following 
\emph{ansatz}, which is suggested by the construction of the optimal
sequence in the sharp interface limit of $E_\epsilon$. 
The interface $\gamma_\epsilon$ satifies $\gamma_\eps(x)\approx
y_\eps+\eps\,\phi(x)$ and on each vertical section the function
$u_\eps^*(x,\cdot)$ minimizes the corresponding energy with the
constraint $u_\eps^*(x,\gamma_\eps(x))=0$.  
We thus perform the change of variable $y\mapsto (y-y_\eps)/\eps$
getting
\begin{equation}\label{riscalato}
F_\eps(u)=
K_\eps\bigg\{ \eps^2\int_{0}^{R}\! 
\int_{- \tfrac{y_\eps}{\eps}}^{+\infty}
( \partial_x u)^2 \rmd y\, \rmd  x
  +  \int_{0}^{R}  \!\bigg[  
  \int_{-\tfrac{y_\eps}{\eps}}^{+\infty} 
\Big( ( \partial_y u)^2 +  V(u) \Big) \rmd  y - C_V \bigg] \rmd  x
\bigg\} \,.
\end{equation}
The above expression suggests that in order to appreciate the
variations in the horizontal direction we have to choose $K_\eps=\eps^{-2}$.
Moreover, the analysis of the one-dimensional case in \cite{bbbeq}
implies that the second term on the right hand side of
\eqref{riscalato} is of the order $\exp\{-\beta_V\eps^{-1}y_\eps\}$,
where $\beta_V=\sqrt{2V''(u_+)}$. 
We therefore conclude that the critical scaling for $y_\eps$ is given
by $y_\eps\approx\frac{2}{\beta_V}\eps\log\eps^{-1}$.

In this paper we analyze the variational convergence of the
functionals $F_\epsilon$ in \eqref{riscalato}; referring to the next
section for the precise statement, we here discuss informally our
results.  
If $y_\eps\gg \frac{2}{\beta_V} \eps\log\eps^{-1}$ we show that the
repulsion due to the boundary is not seen in the limiting functional
and that the rescaled profile corresponding to the minimizer $u^*_\eps$ is
flat, i.e., $\phi=0$.
In the critical scaling $y_\eps= \frac{2}{\beta_V}\eps\log\eps^{-1}$ we prove
that the $\Gamma$-limit of the functionals $F_\epsilon$ in
\eqref{riscalato} is finite only on functions $u$ of the form
$u(x,y)=\bam(y-\phi(x))$, with $\phi(0)=0$. On these functions the
limiting functional is furthermore given  by
\begin{equation}
  \label{limit}
  \int_{0}^{R} \Big[ \tfrac{C_V}{2} \, \phi'(x)^2 
  + B_V  \,\rme^{-\beta_V \phi(x)} \Big] \,\rmd x \,,
\end{equation}
for a suitable constant $B_V>0$ that can be computed explicitly.  
As a consequence of this $\Gamma$-convergence result we deduce the
sharp asymptotic for the minimizer $u_\eps^*$ of the original
functional $E_\epsilon$ in \eqref{acfeps}. Namely,
\begin{equation}
  \label{approx}
  u_\eps^*(x,y)\approx \bam\big(\tfrac{y-\gamma_\eps(x)}{\epsilon} \big)\,,
  \qquad \gamma_\eps(x)\approx y_\eps+\eps\,\phi^*(x)\,,
\end{equation}
where $\phi^*$ is the minimizer of the energy \eqref{limit} with the
boundary condition $\phi(0)=0$. A simple computation then shows,
\begin{equation}
  \label{phi*}
  \phi^*(x)= \frac{2}{\beta_V}
  \log\Big(1 + {\beta_V}\sqrt{\tfrac{B_V}{2C_V}} \, x \Big)\,.
\end{equation}
In the case $y_\epsilon\ll \frac{2}{\beta_V} \eps\log\eps^{-1}$, when $x$ is close to zero, the repulsion from the boundary is much stronger than the penalization on the gradients of $\gamma_\epsilon$. This implies that for each fixed $x$ close to zero we have $\gamma_\epsilon(x)- y_\epsilon\gg \epsilon$. On the other hand, if $x$ is such that $\gamma_\epsilon(x)\approx \frac{2}{\beta_V} \eps\log\eps^{-1}$ we are back in the situation described by the critical scaling.  We therefore expect, but do not prove here, that in this regime the asymptotic expression of the interface $\gamma_\epsilon$ for $x$ bounded away from $0$ has still the form  $\gamma_\eps(x)\approx  \frac{2}{\beta_V} \eps\log\eps^{-1} +\eps\,\phi^*(x)$. 

We conclude with few remarks on the relationship of the problem here
considered with some (microscopic) statistical mechanics models.  
In the context of short-range, Ising-like models, the statistical
properties of an interface above a wall have been mostly studied
for the so-called \emph{effective interface models}, see \cite{F} for
a review. These models are obtained by assuming that the interface can be
described as the graph of some function $\phi\colon\Lambda\to\bb R_+$,
where $\Lambda$ is a finite subset of the lattice $\bb Z^{d-1}$.  
One then introduces a Gibbs measure on the set of the interface
configurations, with a short range energy term penalizing the
gradients of $\phi$, and analyzes the asymptotic behaviour of this
measure as $\Lambda$ invades $\bb Z^{d-1}$.  While the energy is
minimized by an interface localized at the wall, i.e., $\phi=0$, the
presence of the fluctuations induces a repulsion, e.g., the expected
value of $\phi$ diverges as $\Lambda\uparrow\bb Z^{d-1}$. This effect
is referred to as \emph{entropic repulsion}: for the interface it
is more convenient to have some room to fluctuate rather than to
minimize the energy.

The asymptotic \eqref{approx} does not reflect an entropic repulsion effect. In the case of the van der Waals' functional, the repulsion from the wall is in fact due to an energetic effect induced by the boundary conditions. The case of long-range, Kac-like models is, on the other hand, much closer to the
problem considered here. Indeed, on a suitable mesoscopic scale the
behavior of those models is well described by a free energy functional
which, although non local, has similar features to \eqref{acfeps}, see
\cite{Err}. In particular, the corresponding sharp interface limit has
been analyzed in \cite{abcp}, where it is shown that the $\Gamma$-limit 
of the free energy functional is proportional to the perimeter 
of the interface between the pure phases, the proportionality constant
identifying the surface tension. As far as we know, the asymptotic 
behaviour of an interface close to a wall has not been analyzed in detail 
for systems with Kac-type interactions, but it seems reasonable that the 
effects here discussed are also relevant in such a situation.

\section{The main result} 
\label{sec:2}

For the sake of concreteness, we restrict the analysis to the paradigmatic case of the symmetric double well potential, i.e., we choose
    \begin{equation}
    \label{p1}
V(u) =  \big(u^2-1\big)^2,
    \end{equation}
which attains its minimum at $u_\pm=\pm 1$. With this choice, the instanton $\bam$ is given by $\bam(y)=\tanh y$ and elementary computations show $C_V=\frac{8}{3}$, $\beta_V=4$, $B_V=16$ \cite[Appendix A]{bbbeq}.

As reference domain we choose the quadrant of $\bb R^2$ given by $\Omega_\ell=(0,+\infty)\times (-\ell,+\infty)$, $\ell>0$. The case of the half strip $(0,R)\times (-\ell,+\infty)$ discussed in the Introduction can be analyzed by the same arguments. The parameter $\ell$ has been introduced in such a way that the zero of the trace on $\{0\}\times (-\ell,+\infty)$ approaches zero as $\eps\to 0$. Accordingly, the asymptotic expansion of the functional will be discussed in the fine tuning $\ell = \frac 12 \log\eps^{-1} + O(1)$, which corresponds to the critical scaling discussed in the Introduction.

Given $m\colon(-\ell,+\infty)\to \bb R$ we introduce the one dimensional functional
\begin{equation}
  \label{1-d}
\mc F_\ell(m):= 
  \int_{-\ell}^{+\infty}
  \Big[  (m')^2 + V(m) \Big]\,\rmd y\,.
   \end{equation}
With this notation, we can rewrite the functional in \eqref{riscalato}  as
\begin{equation}
  \label{3.4bis}
  F_\epsilon(u):=\int_{0}^{+\infty} \int_{-\ell_\epsilon}^{+\infty}
  ( \partial_x u)^2\, \rmd y \,\rmd x
  +  \epsilon^{-2} \int_{0}^{+\infty} \Big[
  \mc F_{\ell_\eps}(u(x,\cdot))- \frac 83 \Big] \, \rmd x \,,
\end{equation}
where $\ell_\eps\to+\infty$ as $\eps\to 0$.
We will study the asymptotic behaviour, as $\eps\to 0$, in terms of $\Gamma$-convergence of $F_\eps$  subject to the following boundary conditions,
\begin{equation}\label{bc}
\begin{cases}
u(0,y)=w_\epsilon(y) &\hbox{$y\in[-\ell_\eps,+\infty)\, ,$} \\
u(x,-\ell_\eps)=-1&\hbox{$x\in (0,+\infty)\,,$}\\
u(x,\cdot)-1 \in L^2((0,+\infty)) &\hbox{$x\in(0,+\infty)$\,,}
\end{cases}
\end{equation}
for a suitable continuous function $w_\epsilon\colon[-\ell_\eps,+\infty)\to \bb R$ with $w_\eps(-\ell_\eps) = -1$. We shall regard $u$ as a function on $[0,+\infty)\times\bb R$ by setting $u=-1$ on $[0,+\infty)\times (-\infty,-\ell_\eps)$. Accordingly, $w_\eps$ is also regarded as a function in $\bb R$ by setting $w_\eps = -1$ on $(-\infty,-\ell_\eps)$. 

We set $\chi(x,y)= \chi(y):=\sgn(y)$ and define the affine space
\begin{equation*}
X:=\{u\,: u-\chi\in L^2((0,R)\times \bb R)\ \hbox{for any } R>0\}\,,
\end{equation*}
endowed with the metric
\begin{equation*}
d_X(u,v)=\sum_n 2^{-n}\left( 1\wedge \|u-v\|_{L^2((0,n)\times \bb R)}\right)\,. 
\end{equation*}
We shall then regard $F_\epsilon$ as a functional on $X$, which takes value $+\infty$ whenever $u$ does not satisfy the boundary conditions \eqref{bc} or $u$ is not identically equal to $-1$ on $[0,+\infty)\times (-\infty,-\ell_\eps)$. Note that $F_\eps(u)<+\infty$ implies $u\in H^1((0,R)\times\bb R)$ for any $R>0$, and therefore the condition $u(0,\cdot)=w_\eps$ can be understood in terms of traces.

It turns out that the $\Gamma$-limit of $F_\eps$ depends on the limit of $\ell_\epsilon- \frac12\log\epsilon^{-1}$ that will be denoted by $\alpha$. We first introduce the limiting functional. Given $\alpha\in\bb R$, we let $\mc G^\alpha:C([0,+\infty))\to [0,+\infty]$ be the lower semicontinuous, with respect to the uniform convergence on compacts, functional defined by
\begin{equation}
  \label{3.5}
  \mc G^\alpha (\phi) = \int_{0}^{+\infty} \Big[ \frac43 \phi'(x)^2 
  + 16 \, e^{-4\alpha} \,\rme^{-4 \phi(x)} \Big] \,\rmd x \,.
\end{equation}
Recall that $\bam$ is the minimizer of the one-dimensional van der Waals' energy \eqref{acf} with boundary conditions $\pm 1$ at $\pm\infty$ satisfying  $\bam(0)=0$, and denote by $\bam_z$ its translated by $z\in \bb R$, i.e., $\bam_z(y):=\bam(y-z)$ for $y\in \bb R$. We then let $F^{\alpha}\colon X \to [0,+\infty]$ be defined by
\begin{equation}
  \label{3.6}
 F^{\alpha}(u) :=
  \begin{cases}
    \mc G^\alpha (\phi) & \textrm{if } u = \bam_\phi  
    \textrm{ for some $\phi\in C([0,+\infty))$, }  
    \phi(0)=0\,,\\ 
    + \infty & \textrm{otherwise},
  \end{cases}
\end{equation}
where we understand $\bam_\phi(x,y)=\bam_{\phi(x)}(y)$.
\begin{theorem}
  \label{teorema1} 
  Assume $\lim_{\epsilon\to 0}\big[ \ell_\epsilon-
  \frac12\log\epsilon^{-1}]= \alpha$, 
  \begin{equation}
  \label{datobordo}
  \lim_{\eps\to 0} \eps^{-1}\Big(\mc F_{\ell_\eps}(w_\eps)-\frac83\Big)
  = 0\, , \quad \hbox{and}\quad \lim_{\eps\to 0} w_\eps(0)=0\,. 
\end{equation}
Then 
\par\noindent \emph{(Compactness)} If a sequence $u_\eps$ satisfies $\limsup_\eps F_\epsilon(u_\epsilon)< +\infty$, then for any
  $R>0$,
  \begin{equation}\label{tilde0}
\lim_{\epsilon\to 0}\big\|u_\eps-\bam_{\phi_\eps}\big\|_{L^2((0,R)\times\bb R)}^2  = 0\,,
    \end{equation}
for some  sequence $\phi_\epsilon$ precompact in $C([0,+\infty))$, satisfying $\phi_\epsilon(0)\to 0$ as $\eps\to 0$.
\par\noindent In particular, the sequence $F_\epsilon$ is equicoercive in $X$. 
\par\noindent \emph{($\Gamma$-convergence)} The sequence $F_\epsilon$
    $\Gamma$-converges to $F^\alpha$ as $\epsilon\to 0$, i.e.,~for any
    $\phi\in C([0,+\infty))$, with $\phi(0)=0$ we have,
        \begin{itemize}   
        \item[(i)] \emph{($\Gamma$-liminf)} if
          $u_\epsilon\to\bam_\phi$ in $X$, then
        \begin{equation*}
        \liminf_{\epsilon\to 0} F_\epsilon(u_\epsilon)\geq \mc G^\alpha(\phi)\,;
        \end{equation*}
      \item[(ii)] \emph{($\Gamma$-limsup)} there exists
        $u_\epsilon\to\bam_\phi$ in $X$ such that
        \begin{equation*}
        \lim_{\epsilon\to 0} F_\epsilon(u_\epsilon)=\mc G^\alpha(\phi)\,.
        \end{equation*}
  \end{itemize}
\end{theorem}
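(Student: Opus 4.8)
The plan is to reduce the two-dimensional problem to the one-dimensional functional $\mc F_\ell$ of \eqref{1-d} by slicing in the $x$ variable. For $z\in\bb R$ let $\mathbf m_{\ell,z}$ minimize $\mc F_\ell$ among admissible profiles ($m(-\ell)=-1$, $m-1\in L^2$) subject to the pinning $m(z)=0$, and set $e_\ell(z):=\mc F_\ell(\mathbf m_{\ell,z})-\tfrac83$. I will use three one-dimensional facts, in the spirit of \cite{bbbeq}: \emph{(a)} $\mc F_\ell(m)\ge\tfrac83$ for every admissible $m$, by $(m')^2+V(m)\ge 2\sqrt{V(m)}\,|m'|$ and \eqref{cv}, so $e_\ell\ge 0$; \emph{(b)} the sharp expansion $e_\ell(z)=16\,\rme^{-4(\ell+z)}(1+o(1))$ as $\ell+z\to+\infty$, with remainder uniform for $z$ on bounded sets, so that $\eps^{-2}e_{\ell_\eps}(z)\to 16\,\rme^{-4\alpha}\,\rme^{-4z}$ uniformly on compacts (using $\ell_\eps=\tfrac12\log\eps^{-1}+\alpha+o(1)$, $\beta_V=4$, $B_V=16$); \emph{(c)} a quantitative stability: for some $\delta_0>0$, if $\mc F_\ell(m)-\tfrac83<\delta_0$ then $m$ has a unique zero $z_m$ and $\|m-\mathbf m_{\ell,z_m}\|_{H^1}^2\le C(\mc F_\ell(m)-\tfrac83)$, which follows from coercivity of the constrained Hessian $-2\partial_y^2+V''(\mathbf m_{\ell,z})$ on $\{r(-\ell)=r(z)=0\}$, the translational zero mode $\bam'(\cdot-z)$ being excluded by the pinning. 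Since $\mathbf m_{\ell,z}\to\bam_z$ in $H^1$ as $\ell\to\infty$, in the limit the instantons $\bam_z$ replace the $\mathbf m_{\ell,z}$; I also note that $\|\bam_a-\bam_b\|_{L^2(\bb R)}$ depends only on $|a-b|$, giving an increasing homeomorphism $\psi$ of $[0,+\infty)$ with $\psi(r)\sim\sqrt{C_V/2}\,r$ near $0$, and that $\int_{\bb R}\bam'(y)^2\,\rmd y=C_V/2=\tfrac43$.

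For \emph{compactness}, assume $\limsup_\eps F_\eps(u_\eps)=:M<\infty$. For every $R>0$ one gets $\int_0^R\!\!\int_{\bb R}(\partial_x u_\eps)^2\le M$ and $\int_0^R\mc F_{\ell_\eps}(u_\eps(x,\cdot))\,\rmd x\le\tfrac83 R+M\eps^2$, the latter controlling $\int_0^R\!\!\int(\partial_y u_\eps)^2$ and $\int_0^R\!\!\int V(u_\eps)$; thus $u_\eps-\chi$ is bounded in $H^1((0,R)\times\bb R)$ and the $V$-bound yields tightness as $|y|\to\infty$, so by Rellich $u_\eps$ is precompact in $L^2((0,R)\times\bb R)$ for each $R$, hence in $X$. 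If $u$ is a limit point of a subsequence, then since $\int_0^R(\mc F_{\ell_\eps}(u_\eps(x,\cdot))-\tfrac83)\,\rmd x\to0$ with nonnegative integrand, for a.e.\ $x$ one has $\mc F_{\ell_\eps}(u_\eps(x,\cdot))\to\tfrac83$ along a further subsequence, and the one-dimensional compactness forces $u_\eps(x,\cdot)\to\bam_{\phi(x)}$ in $L^2(\bb R)$, with $\phi(x)$ the limit of the zeros; by uniqueness of the trace, $u=\bam_\phi$. As $x\mapsto\bam_{\phi(x)}\in H^1((0,R);L^2(\bb R))$ it is continuous, whence (via $\psi^{-1}$) $\phi\in C([0,+\infty))$, and $\|\bam_a-\bam_b\|_{L^2}\asymp|a-b|$ on bounded sets gives $\phi\in H^1_{\mathrm{loc}}$ with $\int_0^R\phi'^2\le\tfrac34 M$. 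From \eqref{datobordo}, $w_\eps\to\bam_0$, and passing to the weakly continuous trace at $x=0$ yields $u(0,\cdot)=\bam_0$, i.e.\ $\phi(0)=0$. Choosing $\phi_\eps(x)$ to be a measurable selection of a zero of $u_\eps(x,\cdot)$ (the unique one on the set $\{x\in(0,R):\mc F_{\ell_\eps}(u_\eps(x,\cdot))-\tfrac83<\delta_0\}$, whose complement has measure $\le M\eps^2/\delta_0$), facts (a)--(c) together with the $H^1$-bound give that $\phi_\eps$ is precompact in $C([0,+\infty))$ with $\phi_\eps(0)\to0$, and that \eqref{tilde0} holds.

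For the $\Gamma$-\emph{liminf}, let $u_\eps\to\bam_\phi$ in $X$ with $\phi(0)=0$; assume $\liminf_\eps F_\eps(u_\eps)<+\infty$ and pass to a subsequence realizing it with $F_\eps(u_\eps)$ bounded. Since $\partial_x u_\eps$ is bounded in $L^2((0,R)\times\bb R)$ and $u_\eps\to\bam_\phi$ there, $\partial_x u_\eps\rightharpoonup\partial_x\bam_\phi$ weakly, hence by weak lower semicontinuity $\liminf_\eps\int_0^R\!\!\int(\partial_x u_\eps)^2\ge\int_0^R\!\!\int(\partial_x\bam_\phi)^2=\tfrac43\int_0^R\phi'^2$ (if $\phi\notin H^1$ this left side is $+\infty$, matching $\mc G^\alpha(\phi)=+\infty$). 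For the second term, pick for a.e.\ $x$ a zero $\zeta_\eps(x)$ of $u_\eps(x,\cdot)$; the slice then satisfies the pinning constraint, so $\mc F_{\ell_\eps}(u_\eps(x,\cdot))-\tfrac83\ge e_{\ell_\eps}(\zeta_\eps(x))$ and
\[
\eps^{-2}\!\int_0^\infty\!\big[\mc F_{\ell_\eps}(u_\eps(x,\cdot))-\tfrac83\big]\,\rmd x\ \ge\ \int_0^R \eps^{-2}e_{\ell_\eps}(\zeta_\eps(x))\,\rmd x\,.
\]
The compactness argument gives $\zeta_\eps(x)\to\phi(x)$ for a.e.\ $x$, so by (b) the right-hand integrand converges pointwise to $16\,\rme^{-4\alpha}\rme^{-4\phi(x)}$, and Fatou yields $\liminf\ge\int_0^R 16\,\rme^{-4\alpha}\rme^{-4\phi}$. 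Adding the two bounds (super-additivity of $\liminf$) and letting $R\to+\infty$ gives $\liminf_\eps F_\eps(u_\eps)\ge\mc G^\alpha(\phi)$.

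For the $\Gamma$-\emph{limsup}, mollification and a diagonal argument reduce to smooth $\phi$ with $\mc G^\alpha(\phi)<+\infty$ (hence $\phi'\in L^2$ and $\phi(x)\to+\infty$). In the bulk take $u_\eps(x,\cdot):=\mathbf m_{\ell_\eps,\phi(x)}$; then $\partial_x u_\eps=\phi'(x)\,\partial_z\mathbf m_{\ell_\eps,z}\big|_{z=\phi(x)}$ with $\partial_z\mathbf m_{\ell_\eps,z}\to-\bam'(\cdot-z)$ and $\int(\partial_z\mathbf m_{\ell_\eps,z})^2\to\tfrac43$, so dominated convergence gives $\int\!\!\int(\partial_x u_\eps)^2\to\tfrac43\int\phi'^2$; and $\mc F_{\ell_\eps}(u_\eps(x,\cdot))-\tfrac83=e_{\ell_\eps}(\phi(x))$, so $\eps^{-2}\!\int e_{\ell_\eps}(\phi(x))\,\rmd x\to\int 16\,\rme^{-4\alpha}\rme^{-4\phi}$ by (b) and dominated convergence (the error in (b) being dominated by $e_\ell$ itself). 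The step I expect to be the main obstacle is reconciling this bulk construction with the prescribed trace $u_\eps(0,\cdot)=w_\eps$ at a cost tending to $0$. I would insert a boundary layer of width $\eps$: on $[0,\eps]$ set $u_\eps(x,\cdot)=\mathbf m_{\ell_\eps,a_\eps}+s\,(w_\eps-\mathbf m_{\ell_\eps,a_\eps})$ with $s=1-x/\eps$, where $a_\eps$ is the zero of $w_\eps$ (so $a_\eps\to0$ since $w_\eps\to\bam_0$). The point is that $w_\eps-\mathbf m_{\ell_\eps,a_\eps}$ is an \emph{admissible} variation of the constrained minimizer — vanishing at $a_\eps$ and at $-\ell_\eps$ — so by (c) its $H^1$-norm squared, and the excess along the whole interpolation, are $O(\mc F_{\ell_\eps}(w_\eps)-\tfrac83)=o(\eps)$ by \eqref{datobordo}; this layer then adds $\le\eps^{-1}o(\eps)$ to the $x$-energy and $\le\eps^{-2}\cdot\eps\cdot o(\eps)$ to the potential energy, both $o(1)$ — which is exactly why the hypothesis is stated with the factor $\eps^{-1}$. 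A further slowly shrinking layer $[\eps,\eps+\sqrt{a_\eps}]$, with $u_\eps(x,\cdot)=\mathbf m_{\ell_\eps,\zeta(x)}$ and $\zeta$ affine from $a_\eps$ to $\phi(\eps+\sqrt{a_\eps})$, matches onto the bulk at cost $O(a_\eps^{3/2})+O(\sqrt{a_\eps})\to0$, and the residual errors ($\phi$ versus its affine interpolant near $0$, and the missing contribution of $\int_0^{\eps+\sqrt{a_\eps}}$) vanish by continuity of $\phi$ and absolute continuity of $\int\phi'^2$. This produces $u_\eps\to\bam_\phi$ with $\lim_\eps F_\eps(u_\eps)=\mc G^\alpha(\phi)$.
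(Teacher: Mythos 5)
Your overall plan—slice in $x$, reduce to the one-dimensional excess $\mc F_\ell(m)-\tfrac83$—is the paper's, and your $\Gamma$-liminf shortcut, lower-bounding $\mc F_{\ell_\eps}(u_\eps(x,\cdot))-\tfrac83$ by $e_{\ell_\eps}(\zeta_\eps(x))$ directly through the variational characterization of $m^\ell_z$ (with $\zeta_\eps(x)$ any zero of the slice), is a clean simplification of the paper's remainder-term argument. The genuine gap is in your fact~(c), on which both compactness and your limsup construction rest, and which you assert rather than prove. As stated, (c) is false: a profile $m$ with $\mc F_\ell(m)-\tfrac83<\delta_0$ need not have a \emph{unique} zero, since a small $H^1$ perturbation of $m^\ell_z$ can cross zero several times near $z$. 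The paper avoids precisely this by working with the \emph{center} of $m$, its $L^2$-projection onto the manifold $\mc M^\ell$, whose uniqueness on a tubular neighbourhood is obtained by an implicit-function argument. Even granting a well-defined reference shift, the quantitative stability $\|m-m^\ell_\zeta\|^2_{H^1}\lesssim \mc F_\ell(m)-\mc F_\ell(m^\ell_\zeta)+\rme^{-4(\ell+\zeta)}\|m-m^\ell_\zeta\|_{H^1}$ is Theorem~\ref{lemma0}(ii), the technical core of Section~\ref{sec:3}: it needs a uniform-in-$(\ell,\zeta)$ spectral gap of $\mc H^\ell_\zeta$ above its exponentially small ground-state eigenvalue, the closeness \eqref{falso} of $\Psi^\ell_\zeta$ to $\bam'_\zeta/\|\bam'_\zeta\|_{L^2}$, and the orthogonality relation \eqref{ort} supplied by the center. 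Your one line (``coercivity of the constrained Hessian, the zero mode excluded by the pinning'') replaces none of this; in particular the pointwise constraint $r(z_m)=0$ is not the orthogonality condition \eqref{ort}, and showing that it controls the component of $r$ along $\Psi^\ell_\zeta$ uniformly in $\ell$ and $\zeta$ is itself nontrivial. You also use, without stating it, the upper-bound counterpart Theorem~\ref{lemma0}(iii) when claiming the excess along the boundary-layer interpolation is $O(\mc F_{\ell_\eps}(w_\eps)-\tfrac83)$.

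Two secondary issues. In compactness, the fixed threshold $\delta_0$ for the good set only gives $\|u_\eps(x,\cdot)-m^{\ell_\eps}_{\phi_\eps(x)}\|_{H^1}=O(\sqrt{\delta_0})$ there, which does not vanish, so neither \eqref{tilde0} nor the equicontinuity of $\phi_\eps$ (Lemma~\ref{holderphi}) follows; the paper takes the threshold $M_\eps\eps^2$ with $M_\eps\to\infty$ and $M_\eps\eps^2\to0$, balancing the measure of the bad set against the pointwise estimate on the good set. In the limsup, the inner layer of width $\sqrt{a_\eps}$ has gradient cost of order $\bigl(\phi(\eps+\sqrt{a_\eps})-a_\eps\bigr)^2/\sqrt{a_\eps}$, which is $\gtrsim\eps^2/\sqrt{a_\eps}$ for generic smooth $\phi$ with $\phi(0)=0$, hence unbounded when $a_\eps\ll\eps^4$; your claimed $O(a_\eps^{3/2})$ tacitly assumes $\eps\lesssim a_\eps$. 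The paper instead translates $\ell_\eps$ so that the center of $w_\eps$ is exactly $0$, and reduces by density of the $\Gamma$-limsup to $\phi$ vanishing identically near the origin, after which no second layer is needed.
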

  \begin{remark}
  \label{teorema1bis} The above result holds also in the case $\alpha=+\infty$. More precisely, if $\ell_\eps$ satisfies  $\lim_{\epsilon\to 0}\big[ 
  \ell_\epsilon- \frac12\log\epsilon^{-1}]=+\infty$ and $w_\eps$ satisfies \eqref{datobordo},
  the statements in Theorem~\ref{teorema1} hold true with $\mc G^\alpha(\phi)$
  replaced by $ \frac 43 \int_0^{+\infty} |\phi'|^2\rmd x$.
\end{remark}
 
By standard properties of $\Gamma$-convergence, see e.g.~\cite[Theorem~1.21]{marito1}, the above results imply the convergence of the
minimizers of $F_\eps$ to the (unique) minimizer of the corresponding
limiting functionals. In particular, in the critical scaling, the
repulsion of the boundary conditions on $\partial \Omega_{\ell_\eps}$ 
competes with the tendency of being flat and the minimizers of $F_\eps$ converge in $X$ to $\bam_{\phi_\alpha^*}$, where $\phi^*_\alpha(x)=\frac{1}{2}\log\big(1 +\rme^{-2\alpha} \, 4\sqrt3\,x\big)$ is the minimizer of $\mc G^\alpha$ with the boundary condition $\phi(0)=0$.
On the other hand, when $\ell_\eps\gg \frac{1}{2}\log\eps^{-1}$, the
repulsion of the boundary conditions on $\partial \Omega_{\ell_\eps}$ is 
not felt and the optimal interface is flat. This is consistent with the 
fact that as $\alpha\to +\infty$ we have $\phi^*_\alpha(x)\to 0$, $x\ge 
0$. When $\alpha\to -\infty$ we instead have $\phi^*_\alpha (x) + \alpha
\to \frac 12 \log \big( 4\sqrt{3} \, x\big)$, $x>0$. This corresponds to
the situation, as described in the Introduction, in which $y_\epsilon\ll
\frac{2}{\beta_V} \eps\log \eps^{-1}$.  We emphasize that
the latter convergence of $\phi^*_\alpha$ cannot be described in 
variational terms because the amount of energy, as measured by $\mc G^\alpha$, stored in any neighborhood of zero diverges as $\alpha\to -\infty$ while the energy stored in any neighborhood not containing zero remains finite and strictly positive. 

\smallskip
The rest of the paper is organized in the following way. Section~\ref{sec:3} and Appendix~\ref{sec:a} are devoted to a detailed study of the asymptotic expansion by $\Gamma$-convergence of the one dimensional functional $\mc F_\ell$ in \eqref{1-d} as $\ell\to+\infty$. Such analysis is a preliminary tool for the proof of Theorem~\ref{teorema1}, which is the content of  Section~\ref{sec:4} (compactness) and Section~\ref{sec:5} ($\Gamma$-convergence).

\section{One-dimensional problem}
\label{sec:3}

In this section we analyze the one dimensional functional $\mc F_\ell$ defined in \eqref{1-d}. The development by $\Gamma$-convergence of $\mc F_\ell$ as $\ell\to+\infty$ is studied in \cite{bbbeq}. Here we prove  quantitative estimates related to that asymptotic expansion. Hereafter we shorthand $L^2(\bb R)$ and $H^1(\bb R)$ by $L^2$ and $H^1$, respectively.

Recalling $\chi(y) = \sgn(y)$, we set $\mc X:=\{m\colon m-\chi\in L^2\}$, that we consider endowed with the strong $L^2$-topology. Given $\ell>0$, we let $\mc X_\ell\subset\mc X$ be the closed subspace defined by
\begin{equation}
\label{bc1}
\mc X_\ell =\{ m\in \mc X\,:\ m(y)=-1\;\hbox{ if $y\in(-\infty,-\ell)$}\}\,.
\end{equation}
We then regard $\mc F_\ell$ as a functional on $\mc X$ which takes value $+\infty$ whenever $m\notin\mc X_\ell$. It is simple to show that the sequence of functionals $\mc F_\ell$ $\Gamma$-converges to the functional $\mc F\colon \mc X \to [0,+\infty]$, defined by 
\begin{equation}
\label{mcf}
\mc F(m) := \int_{-\infty}^{+\infty} \Big[  (m')^2 + V(m) \Big]\,\rmd y\,.
\end{equation}
By the well known Modica-Mortola trick \cite{mm}, 
\begin{equation}
\label{unimc}
\min\mc F= C_V =\frac 83, \qquad \mathrm{arg\,min}\,\mc F = \{\bam_z,\, z\in\bb R\}\,.
\end{equation}

Given $z\in(-\ell,+\infty)$ we define
\begin{equation}
\label{nome}
m^\ell_z=\mathrm{arg\,min}\,\big\{\mc F_\ell(m)\colon\ m\in\mc X_\ell ,\ m(z)=0\big\}\, ,
\end{equation}
observing that the minimizer is unique. We introduce the one dimensional manifold $\mc M^\ell:=\{m_z^\ell\colon z\in(-\ell,+\infty)\}$ in $\mc X$. Sometimes, we will use the notation $m^\ell_z(\cdot)= m^\ell(\cdot,z)$. If $y>z$ then $m^\ell_z(y)=\bam_z(y)$. Moreover, for $y\in (-\ell,z)$, $m^\ell_z$ coincides with the (unique) solution to the following boundary value problem,
\begin{equation}
\label{eulero}
\begin{cases}
-2m'' +V'(m)=0\ & \hbox{in}\ (-\ell,z)\, ,\\
m(-\ell)=-1\,,\ m(z)=0\,.
\end{cases}
\end{equation}
We next state, referring to the Appendix~\ref{sec:a} for the proof, sharp estimates concerning $m_z^\ell$ and its convergence to $\bam_z$. 
\begin{proposition}
\label{prop_ml}
There exists a constant $A$ such that, for any $\ell>0$ and $z\in \bb R$ satisfying $\ell+z\geq 1$,
\begin{eqnarray}
\label{stim1}
&& \sup_{y\in(-\ell,z)} |m^\ell_z(y)-\bam_z(y)| \le  A\rme^{-2(\ell+z)}, \\
\label{mellezeta}
&&\sup_{y\in(-\ell,z)} |\partial_z m^\ell_z(y)+\bam_z'(y)| \le  A\rme^{-2(\ell+z)},  \\ \label{mellezeta2}
&& \sup_{y\in(-\ell,z)} |\partial_{zz} m^\ell_z(y)-\bam_z''(y)| \le  A\rme^{-2(\ell+z)}\,, 
\\ \label{salto}
&&  \left[(m^\ell_z)'\right](z) +  \left[\partial_z m^\ell_z\right](z) = 0, \qquad \left|\left[(m^\ell_z)'\right](z)\right|\leq A\rme^{-4(\ell+z)}\,,
\end{eqnarray}
where $[f](z)$ denotes the jump of the function $f$ at $z$. Moreover, for any $\ell$, $z_1$, and $z_2$ such that $(z_1+\ell)\wedge(z_2+ \ell)\geq 1$,
\begin{equation}
\label{lip}
\frac1{A}(|z_1-z_2|^2\wedge |z_1-z_2|)\leq \|m^\ell_{z_1}-m^\ell_{z_2}\|^2_{L^2}\leq A (|z_1-z_2|^2\wedge |z_1-z_2|) \, .
\end{equation}
\end{proposition}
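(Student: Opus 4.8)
The plan is to reduce the whole proposition to the analysis of a single one-parameter family of solutions of the first-order ODE coming from the conservation law. On $(-\ell,z)$ the minimizer $m^\ell_z$ solves $-2m''+V'(m)=0$; multiplying by $m'$ gives $(m')^2=V(m)+\delta$ for a constant $\delta=\delta_{\ell,z}$, which must be $>0$ (otherwise $m^\ell_z\equiv -1$), so $m^\ell_z$ is strictly increasing there and, by translation invariance, $m^\ell_z(y)=\widehat m_\delta(y-z)$, where $\widehat m_\delta$ solves $\widehat m'=\sqrt{V(\widehat m)+\delta}$, $\widehat m(0)=0$; note $\widehat m_\delta\in(-1,0)$ on $(-T(\delta),0)$ with $T(\delta):=\int_{-1}^{0}(V(m)+\delta)^{-1/2}\,\rmd m$, that $\ell+z=T(\delta)$, and that $\widehat m_0=\bam=\tanh$. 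Since $\sqrt{V(m)}=1-m^2$ on $[-1,0]$, a direct analysis of the mildly singular integral near $m=-1$ yields $T(\delta)=-\tfrac14\log\delta+C_0+O(\sqrt\delta)$, $T'(\delta)\sim-\tfrac14\delta^{-1}$, $|T''(\delta)|\le C\delta^{-2}$; inverting, $A^{-1}\rme^{-4(\ell+z)}\le\delta\le A\rme^{-4(\ell+z)}$ and $|\delta'|,|\delta''|\le A\rme^{-4(\ell+z)}$ (with $'$ the derivative in $\ell+z$), uniformly for $\ell+z\ge 1$. The key point is that $\rme^{-2(\ell+z)}\asymp\sqrt\delta$, which is the scale in \eqref{stim1}–\eqref{mellezeta2}.

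The core estimate is a perturbation bound that does not deteriorate over the interval $(-T(\delta),0)$, whose length diverges like $\tfrac14\log(1/\delta)$. Writing $h=\widehat m_\delta-\widehat m_0$, one gets $h(0)=0$ and $h'=b\,h+r$, with $b=V'(\xi)\big/\big(\sqrt{V(\widehat m_\delta)+\delta}+\sqrt{V(\widehat m_0)+\delta}\big)$ ($\xi$ between the two solutions) and $r=\sqrt{V(\widehat m_0)+\delta}-\sqrt{V(\widehat m_0)}$. Because $V'>0$ on $(-1,0)$ one has $0\le b\le C$ along the whole trajectory — equivalently, $\psi:=\bam'>0$ solves the linearized equation — and $0\le r(s)\le\delta/(1-\tanh^2 s)=\delta\cosh^2 s$. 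Integrating the linear ODE from $0$ backwards and using $b\ge 0$ to bound the integrating factor by $1$, one finds $|h(t)|\le\int_t^0 r(s)\,\rmd s\le\delta\int_0^{|t|}\cosh^2\le C\big(\delta\,\rme^{2|t|}+\delta|t|\big)\le C\sqrt\delta$, because $|t|\le T(\delta)$ and $\rme^{2T(\delta)}\asymp\delta^{-1/2}$; this is \eqref{stim1}. Running the same scheme on $p:=\partial_\delta\widehat m_\delta$ (which solves $p'=\widetilde b\,p+\widetilde r$, $p(0)=0$, $0\le\widetilde b\le C$, $\widetilde r=1/(2\widehat m_\delta')$) and changing variable to $u=\widehat m_\delta(t)$ gives $|p(t)|\le\int_{-1}^{0}(2(V(u)+\delta))^{-1}\,\rmd u\le C\delta^{-1/2}$, and likewise $|p'|\le C\delta^{-1/2}$ and $|\partial_{\delta\delta}\widehat m_\delta|,|(\partial_{\delta\delta}\widehat m_\delta)'|\le C\delta^{-3/2}$; boundedness of $b,r',\dots$ also gives $|h'|,|h''|\le C\sqrt\delta$.

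The remaining estimates follow by the chain rule. On $(-\ell,z)$, $\partial_z m^\ell_z(y)=-\widehat m_\delta'(y-z)+p(y-z)\,\delta'$, hence $\partial_z m^\ell_z+\bam_z'=-h'(y-z)+p(y-z)\,\delta'$, which is $\le C(\sqrt\delta+\delta^{-1/2}\delta)\le C\sqrt\delta$, i.e.\ \eqref{mellezeta}; one more differentiation produces the four terms $h''$, $p'\delta'$, $(\partial_{\delta\delta}\widehat m_\delta)(\delta')^2$, $p\,\delta''$, each $\le C\sqrt\delta$, which is \eqref{mellezeta2}; for $y>z$ both left-hand sides vanish identically, so the suprema are attained on $(-\ell,z)$. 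For \eqref{salto}, $(m^\ell_z)'(z^-)=\sqrt{V(0)+\delta}=\sqrt{1+\delta}$ and $(m^\ell_z)'(z^+)=\bam'(0)=1$, so $|[(m^\ell_z)'](z)|=\sqrt{1+\delta}-1\le\tfrac\delta2\le A\rme^{-4(\ell+z)}$; and since $m^\ell_z(z)\equiv 0$, differentiating this constant relation along the diagonal from each side gives $(m^\ell_z)'(z^\pm)+\partial_z m^\ell_z(z^\pm)=0$ (using $p(0)=0$ on the $z^-$ side), so subtracting yields the identity. For \eqref{lip} I would first prove that $z\mapsto m^\ell_z$ is pointwise strictly decreasing, by comparing two members of the family at a would-be crossing point through the first integrals together with the fact that $\delta$ is decreasing in $z$; the upper bound then follows from $\|m^\ell_{z_1}-m^\ell_{z_2}\|_{L^2}\le|z_1-z_2|\sup_z\|\partial_z m^\ell_z\|_{L^2}$ and, using monotonicity, $\|m^\ell_{z_1}-m^\ell_{z_2}\|_{L^2}^2\le 2\|m^\ell_{z_1}-m^\ell_{z_2}\|_{L^1}=2\big|\!\int_{z_2}^{z_1}\!\!\int_{\bb R}\partial_z m^\ell_z\,\rmd y\,\rmd z\big|\le C|z_1-z_2|$; the lower bound follows, for $|z_1-z_2|$ small, from $\langle m^\ell_{z_1}-m^\ell_{z_2},\partial_z m^\ell_{z_2}\rangle\ge c|z_1-z_2|$ (continuity of $z\mapsto\partial_z m^\ell_z$ in $L^2$ and $\inf_z\|\partial_z m^\ell_z\|_{L^2}>0$), and for $|z_1-z_2|$ large from $m^\ell_{z_2}-m^\ell_{z_1}$ being bounded below on an interval of length $\ge c|z_1-z_2|$ lying between the two transition layers, the bounded intermediate range being handled by compactness in the parameter $\ell+z$.

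The main obstacle is precisely the core estimate \eqref{stim1} and its $\delta$-derivative analogues: the error is of order $\sqrt\delta\asymp\rme^{-2(\ell+z)}$ rather than $\delta$, because the boundary values mismatch by $\bam_z(-\ell)-m^\ell_z(-\ell)=1-\tanh(\ell+z)=O(\rme^{-2(\ell+z)})$ at $y=-\ell$, and this discrepancy must be transported over an interval whose length grows like $\log(1/\delta)$, which no naive Gronwall estimate can survive; what rescues the argument is the sign $b\ge 0$, i.e.\ the positivity of the linearized solution $\bam'$ on $(-1,0)$, which turns the would-be exponential amplification into a contraction.
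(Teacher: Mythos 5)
Your proof is correct and reaches the same estimates, but by a genuinely different route. Both you and the paper exploit the first integral $(m')^2=V(m)+\delta$ and the relation $\ell+z=\int_{-1}^0(V(a)+\delta)^{-1/2}\,\rmd a$, and both extract the scale $\delta\asymp\rme^{-4(\ell+z)}$ from the logarithmic divergence of that integral near $a=-1$. The paper parametrizes by $T=\ell+z$, cites the core bound $\sup_t|m_T-\bam|\le C\rme^{-2T}$ together with $E_T\rme^{4T}\to 64$ from \cite[Lemma~A.1]{bbbeq}, and then gets the $T$-derivative bounds by differentiating the implicit integral identity for $m_T$ in $T$ and estimating the resulting explicit formulas for $\partial_T m_T$, $\partial_{TT}m_T$, $\partial_T m_T'$. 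You instead parametrize by the energy $\delta$ and prove the core bound yourself via the linear ODE $h'=bh+r$ for $h=\widehat m_\delta-\bam$; the crux of your argument — that $b\ge 0$ because $V'\ge 0$ on $(-1,0]$, so the backward integrating factor is at most one and the naive exponential amplification over an interval of length $\asymp\tfrac14\log(1/\delta)$ disappears — is exactly the mechanism hidden in the cited lemma, and you run it again for $p=\partial_\delta\widehat m_\delta$ before translating $\delta$-derivatives into $z$-derivatives via $|\delta'|\asymp|\delta''|\asymp\delta$. Your route is more self-contained and makes the origin of the $\sqrt\delta\asymp\rme^{-2(\ell+z)}$ scale transparent; the paper's is shorter once its reference is granted. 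For \eqref{lip} your three-regime split is also sound, and the separation argument on $(z_1,z_2)$ for large gaps is in fact needed: the reduction to $G(z)=\int_0^{+\infty}(\bam(y+z)-\bam(y))^2\,\rmd y$ cannot by itself give a linear lower bound for large $z$, since that $G$ is bounded. A few of your steps are asserted rather than written out (the Gronwall source estimate for $\partial_{\delta\delta}\widehat m_\delta$, the uniformity in $\ell,z$ of the small-gap inner-product lower bound, the expansion of $T(\delta)$), but each follows directly from the framework you already set up — in particular the uniform $L^2$-Lipschitz continuity of $z\mapsto\partial_z m^\ell_z$ required for the small-gap bound is immediate from \eqref{mellezeta2}.
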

\begin{remark}
\label{35}
Since $m^\ell_z(y)=\bam_z(y)$ for $y>z$ and $m^\ell_z(y)=-1$ for $y<-\ell$ the above bounds and \eqref{eulero} yield that $m^\ell_z$ converges to $\bam_z$ in $H^2$. In particular,
\begin{equation}
\label{dz}
\lim_{\ell\to +\infty}\int_{-\ell}^{+\infty}\! (\partial_z m^\ell_z)^2\, \rmd y=\int_{-\infty}^{+\infty}\! \bam_z'(y)^2 \, \rmd y = \frac43\,,
\end{equation}
uniformly with respect to $z\in [\bar z_\ell, +\infty)$ with 
$\ell+\bar z_\ell\to +\infty$.
\end{remark}

The following lemma is proven in \cite[Lemma~A.1]{bbbeq}. It is the key ingredient to study the development by $\Gamma$-convergence of the functionals $\mc F_\ell$.
\begin{lemma}
\label{fucking}
Let $z\in\bb R$ and $z_\ell$ be a sequence  converging to $z$. Then
\begin{equation*}
\lim_{\ell\to+\infty} \rme^{4\ell}\Big[\mc F_\ell(m_{z_\ell}^\ell)-\frac83\Big]=16\,\rme^{-4z}
\end{equation*}
and, given $\bar z\in\bb R$, this limit is uniform for $z\in [\bar z,+\infty)$.
\end{lemma}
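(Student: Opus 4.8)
The plan is to reduce the statement to the asymptotics, as $c\to 0$, of two explicit one‑variable integrals, exploiting the first integral of \eqref{eulero} and the special form $V(u)=(u^2-1)^2$, so that $\sqrt{V(m)}=1-m^2$ for $m\in(-1,0)$. \emph{Step 1 (first integral and reduction).} On $(-\ell,z)$ the minimizer $m^\ell_z$ solves \eqref{eulero}; multiplying by $(m^\ell_z)'$ gives that $((m^\ell_z)')^2-V(m^\ell_z)$ is constant there, and evaluating at $y=-\ell$ (where $m^\ell_z=-1$, $V(-1)=0$) identifies this constant as $c:=\big((m^\ell_z)'(-\ell^+)\big)^2\ge 0$. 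Since the constants $0,\pm1$ solve \eqref{eulero}, by uniqueness for the initial value problem $-1<m^\ell_z<0$ on $(-\ell,z)$; hence $c>0$, $((m^\ell_z)')^2=V(m^\ell_z)+c>0$, and $m^\ell_z$ is strictly increasing with $(m^\ell_z)'=\sqrt{V(m^\ell_z)+c}$. On $(z,+\infty)$ one has $m^\ell_z=\bam_z$, which contributes exactly $\tfrac12 C_V=\tfrac43$ to $\mc F_\ell$. The change of variable $y\mapsto m=m^\ell_z(y)$ on $(-\ell,z)$, together with $\int_{-1}^0 2\sqrt{V(m)}\,\rmd m=\tfrac43$, then yields
\begin{equation*}
\mc F_\ell(m^\ell_z)-\frac83
=\int_{-1}^0\frac{\big(\sqrt{V(m)+c}-\sqrt{V(m)}\,\big)^2}{\sqrt{V(m)+c}}\,\rmd m=:\Phi(c),
\qquad
\ell+z=\int_{-1}^0\frac{\rmd m}{\sqrt{V(m)+c}}=:G(c).
\end{equation*}
Since $G$ is a decreasing bijection of $(0,+\infty)$ onto itself, $c=G^{-1}(\ell+z)$ depends only on $t:=\ell+z$, and so does $\mc F_\ell(m^\ell_z)-\tfrac83=\Phi(G^{-1}(t))$.

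\emph{Step 2 (asymptotics of $\Phi$ and $G$).} Both integrals concentrate near $m=-1$, where $1-m^2\sim2(1+m)$. Writing $\Phi(c)=\int_{-1}^0 c^2\big[(\sqrt{V+c}+\sqrt V)^2\sqrt{V+c}\big]^{-1}\rmd m$ and substituting $m=-1+\tfrac{\sqrt c}{2}\,\sigma$ reduces $\Phi(c)$, in the limit $c\to0$, to $\tfrac c2\int_0^\infty\big[(\sqrt{\sigma^2+1}+\sigma)^2\sqrt{\sigma^2+1}\big]^{-1}\rmd\sigma$; the rescaled integrand decays like $\sigma^{-3}$, so its tail is uniformly controlled, and the substitution $\sigma=\sinh\theta$ turns the integral into $\int_0^\infty e^{-2\theta}\rmd\theta=\tfrac12$, whence $\Phi(c)=\tfrac c4(1+o(1))$. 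For $G$, subtracting the explicitly integrable $\int_{-1}^0(4(1+m)^2+c)^{-1/2}\rmd m=\tfrac12\operatorname{arcsinh}(2/\sqrt c)=-\tfrac14\log c+\log2+o(1)$ and using $\tfrac1{1-m^2}-\tfrac1{2(1+m)}=\tfrac1{2(1-m)}$ to evaluate the convergent remainder gives $G(c)=-\tfrac14\log c+\tfrac32\log2+o(1)$. Inverting, $c=G^{-1}(t)=64\,e^{-4t}(1+o(1))$ as $t\to+\infty$.

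\emph{Step 3 (conclusion and uniformity).} Combining the two steps, $\mc F_\ell(m^\ell_z)-\tfrac83=\Phi(G^{-1}(t))=\tfrac14\cdot64\,e^{-4t}(1+o(1))=16\,e^{-4t}(1+o(1))$ as $t=\ell+z\to+\infty$. If $z_\ell\to z$ then $\ell+z_\ell\to+\infty$, so $e^{4\ell}\big[\mc F_\ell(m^\ell_{z_\ell})-\tfrac83\big]=16\,e^{-4z_\ell}(1+o(1))\to16\,e^{-4z}$. Writing $\Phi(G^{-1}(t))=16\,e^{-4t}(1+\eta(t))$ with $\eta(t)\to0$, one has $e^{4\ell}[\mc F_\ell(m^\ell_z)-\tfrac83]-16\,e^{-4z}=16\,e^{-4z}\eta(\ell+z)$, whose supremum over $z\in[\bar z,+\infty)$ is bounded by $16\,e^{-4\bar z}\sup_{t\ge\ell+\bar z}|\eta(t)|\to0$, which gives the uniformity.

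\emph{Main obstacle.} The real work is in the bookkeeping of constants: one must establish $\Phi(c)=\tfrac c4(1+o(1))$ \emph{and} $c=64\,e^{-4t}(1+o(1))$ so that their product is exactly $16\,e^{-4t}$, with $16=B_V$. This needs a careful control of the error in the rescaling of $\Phi$ (the rescaled integrand is not uniformly close to its limit for $\sigma$ of order $c^{-1/2}$, so one cuts at a large fixed $\sigma=M$, lets $c\to0$, and only then $M\to\infty$), and an honest treatment of the logarithmically divergent $G(c)$, including the effect of replacing $V$ by its quadratic approximation near $m=-1$.
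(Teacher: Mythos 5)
Your proof is correct, and it is worth noting that the paper does not actually prove this lemma: it cites \cite[Lemma~A.1]{bbbeq}, so what you have written is a genuine filling-in rather than a rederivation of an argument present in the text. Your route is the natural one and is consistent with the machinery the paper does use in Appendix~\ref{sec:a}: your energy constant $c$ is exactly the quantity $E_T$ of \eqref{et} (with $T=\ell+z$), your relation $\ell+z=G(c)$ is \eqref{et} itself, and in Step~2 you re-derive the asymptotic $c=64\,e^{-4t}(1+o(1))$, which the paper also imports from the reference as \eqref{eti}. The computational content is all right: the split $\mc F_\ell(m^\ell_z)-\tfrac83=\Phi(c)$ with $\Phi(c)=\int_{-1}^0 c^2\big[(\sqrt{V+c}+\sqrt V)^2\sqrt{V+c}\big]^{-1}\rmd m$ is an exact identity (using $\int_{-1}^0 2\sqrt V\,\rmd m=\tfrac43$ and the contribution $\tfrac43$ from $(z,+\infty)$); the rescaling $m=-1+\tfrac{\sqrt c}{2}\sigma$ makes the integrand $h_c(\sigma)$ bounded by $\min(1,8/\sigma^3)$ uniformly in $c\le1$, so dominated convergence (rather than the $M$-cutoff you describe) already gives $\Phi(c)=\tfrac c4(1+o(1))$ directly; the $\operatorname{arcsinh}$ comparison plus the identity $\tfrac1{1-m^2}-\tfrac1{2(1+m)}=\tfrac1{2(1-m)}$ (whose integral over $(-1,0)$ is $\tfrac12\log2$) gives $G(c)=-\tfrac14\log c+\tfrac32\log2+o(1)$; and the uniformity over $z\in[\bar z,+\infty)$ follows from your monotone $\eta$-estimate. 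The one place that is stated a bit loosely is the assertion $-1<m^\ell_z<0$ on $(-\ell,z)$ (you invoke ``uniqueness for the initial value problem'' with the constant solutions); this is standard, but a cleaner way is to observe that any minimizer of \eqref{nome} must be monotone on $(-\ell,z)$ by the Modica--Mortola rearrangement, after which your first-integral identity pins everything down.
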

The notion of center for  functions in $\mc X_\ell$, introduced in \cite{bdp}, will play an important role in our analysis. 
\begin{definition}
\label{centro}
Given $m\in \mc X_\ell$ we say that $\zeta\in(-\ell,+\infty)$ is a \emph{center} of $m$ if
\begin{equation*}
\zeta\in \mathrm{arg\,min}\left\{\|m-m^\ell_z\|^2_{L^2}\colon z\in(-\ell,+\infty)\right\}\,.
\end{equation*}  
In particular, the function $m^\ell_\zeta$ is a $L^2$-projection of $m$ on the manifold $\mc M^\ell$.
\end{definition}
Referring to \cite{bbb} for a dynamical interpretation of the above definition, we simply note that if $\zeta$ is a center of $m$ then the following orthogonality condition holds,
\begin{equation}
  \label{ort}
  \int_{-\ell}^{+\infty}\!
  \big[ m(y)-m^\ell_\zeta(y) \big] \, \partial_z m^\ell_\zeta(y) \,\rmd y=0\, ,
\end{equation}
where $\partial_z m^\ell_\zeta(y)=\partial_z m^\ell_z(y){\big|_{z=\zeta}}$.
We next introduce a suitable neighborhood of the manifold $\mc M^\ell$, which takes into account the boundary conditions \eqref{bc1}. More precisely, given $\delta>0$ and $k>0$ we set
\begin{equation}
\label{palma}
\mc T^\ell(\delta,k):=\left\{m\in \mc X_\ell\colon \exists\, z\in(-\ell+k,+\infty) \hbox{ such that } \|m-m^\ell_z\|_{H^1}<\delta  \right\}\, .
\end{equation}
The following result shows, in particular, that if $m$ is such that $\mc F_\ell(m)$ is close to its minimum then $m$ is close to the manifold $\mc M^\ell$.
\begin{theorem}
\label{lemma0} 
The following statements hold.
\begin{itemize}
\item[(i)] For each $\delta>0$ and $\kappa>0$ there exist $\eta>0$ and $\ell_0>0$ such that if $\mc F_\ell(m)-\frac83<\eta$ for some $\ell\geq\ell_0$ then $m\in\mc T^\ell(\delta,\kappa)$.  
\item[(ii)] There exist constants $\ell_1$, $\delta_1$, $\kappa_1$, and $C_1$ such that, for all $\ell\geq \ell_1$, $\delta\leq \delta_1$, and $\kappa\geq\kappa_1$, if $m\in\mc T^\ell(\delta,\kappa)$ then the center $\zeta$ of $m$ is unique, satisfies $\zeta>-\ell+\kappa-2\delta$, and
\begin{equation}
\label{gapimprove}
\|m-m^\ell_\zeta\|^2_{H^1}\leq C_1\left[\mc F_\ell(m)-\mc F_\ell(m^\ell_\zeta) +\rme^{-4(\ell+\zeta)}\|m-m^\ell_\zeta\|_{H^1}\right]\,.
\end{equation}
\item[(iii)] For each $\bar z\in\bb R$ there exist two positive constants $C_2$ and $\ell_2$ such that, for any $\ell>\ell_2$ and $z\in [\bar z,+\infty)$, 
\begin{equation*}
\phantom{merdaa} \mc F_\ell(m)-\mc F_\ell(m^\ell_z) \leq C_2 \left(\|m-m^\ell_z\|^2_{H^1}+ \|m-m^\ell_z\|^4_{H^1}+\rme^{-4\ell}\|m-m^\ell_z\|_{H^1}\right)\,,
\end{equation*}
for all $m\in \mc X_\ell$.
\end{itemize}
\end{theorem}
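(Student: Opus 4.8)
The plan is to treat the three parts separately, each relying on the quantitative estimates of Proposition~\ref{prop_ml} and the expansion in Lemma~\ref{fucking}.

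\emph{Part (i).} This is a compactness-type statement: if $\mc F_\ell(m)$ is close to its minimum $\tfrac83 = \min\mc F$, then $m$ is $H^1$-close to the manifold $\mc M^\ell$. First I would argue by contradiction: suppose there are $\delta,\kappa>0$, a sequence $\ell_n\to+\infty$, and $m_n\in\mc X_{\ell_n}$ with $\mc F_{\ell_n}(m_n)-\tfrac83\to 0$ but $m_n\notin\mc T^{\ell_n}(\delta,\kappa)$. Since $\mc F_{\ell_n}(m_n)$ is bounded, $(m_n')$ is bounded in $L^2$ and, after the usual translation argument (extracting a center or just the zero of $m_n$), one gets $m_n\to \bam_{z}$ in $L^1_{loc}$ along a subsequence by the standard Modica--Mortola lower-semicontinuity/compactness argument; the quadratic behaviour of $V$ near $\pm1$ together with $\mc F_{\ell_n}(m_n)\to\tfrac83$ upgrades this to convergence in $H^1$ (control the tails using $V(m)\ge c(m\mp1)^2$ near the wells, and the jump-free structure of the limit). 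One must also check the center ends up in the allowed range $z>-\ell_n+\kappa$: this follows because if the transition were within distance $\kappa$ of $-\ell_n$ one could lower the energy by pushing it away, contradicting $\mc F_{\ell_n}(m_n)\to\tfrac83$. This contradicts $m_n\notin\mc T^{\ell_n}(\delta,\kappa)$ and proves (i).

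\emph{Part (ii).} This is the heart of the matter: a \emph{spectral gap / coercivity} estimate for $\mc F_\ell$ around its projection $m^\ell_\zeta$. First I would establish uniqueness of the center in $\mc T^\ell(\delta,\kappa)$: by \eqref{lip} the map $z\mapsto\|m-m^\ell_z\|^2_{L^2}$ is strictly convex-like (uniformly quadratic) near its minimum when $z+\ell\ge 1$, so for $m$ within $\delta$ of the manifold the minimizer is unique and, quantitatively, $\zeta>-\ell+\kappa-2\delta$ (because $\|m-m^\ell_z\|_{H^1}<\delta$ for some $z>-\ell+\kappa$ and the $L^2$-distance between $m^\ell_z$'s grows, by \eqref{lip}, like $\min(|z_1-z_2|,|z_1-z_2|^2)$). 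For the coercivity estimate \eqref{gapimprove}, write $m=m^\ell_\zeta+v$ with the orthogonality \eqref{ort}, i.e.\ $v\perp\partial_z m^\ell_\zeta$ in $L^2(-\ell,+\infty)$, and Taylor-expand:
\begin{equation*}
\mc F_\ell(m)-\mc F_\ell(m^\ell_\zeta)
= \langle \mc F_\ell'(m^\ell_\zeta),v\rangle
+ \tfrac12\,Q_\zeta(v) + o(\|v\|_{H^1}^2),
\end{equation*}
where $Q_\zeta(v)=\int_{-\ell}^{+\infty}\big[2(v')^2 + V''(m^\ell_\zeta)v^2\big]\rmd y$ is the second variation and the linear term $\langle \mc F_\ell'(m^\ell_\zeta),v\rangle$ is a boundary term: since $m^\ell_\zeta$ solves $-2m''+V'(m)=0$ on $(-\ell,\zeta)$ and on $(\zeta,+\infty)$ separately, integration by parts leaves only the jump of $(m^\ell_\zeta)'$ at $\zeta$ paired with $v(\zeta)$; by \eqref{salto} this is bounded by $A\rme^{-4(\ell+\zeta)}|v(\zeta)|\le C\rme^{-4(\ell+\zeta)}\|v\|_{H^1}$, which is exactly the error term in \eqref{gapimprove}. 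It remains to prove $Q_\zeta(v)\ge c\|v\|^2_{H^1}$ on the subspace $v\perp\partial_z m^\ell_\zeta$, uniformly in $\ell$ large and $\zeta+\ell\ge\kappa$. This follows from the known nondegeneracy of the linearised operator $-2\partial_{yy}+V''(\bam)$ on the whole line (kernel spanned by $\bam'$, spectral gap on its orthogonal complement) via a perturbation argument: by Remark~\ref{35}, $m^\ell_\zeta\to\bam_\zeta$ in $H^2$ so $V''(m^\ell_\zeta)\to V''(\bam_\zeta)$ uniformly, and the orthogonality direction $\partial_z m^\ell_\zeta\to-\bam_\zeta'$ by \eqref{mellezeta}; a contradiction/compactness argument (rescaling $\|v_n\|_{H^1}=1$, extracting a weak limit) then transfers the gap from the limiting operator to $Q_\zeta$ for $\ell$ large. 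Combining the Taylor expansion, the lower bound on $Q_\zeta$, and absorbing the $o(\|v\|^2_{H^1})$ term for $\delta$ small yields \eqref{gapimprove}.

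\emph{Part (iii).} This is the complementary upper bound and is softer. For fixed $\bar z$ and $z\ge\bar z$, write again $m=m^\ell_z+v$ (no orthogonality needed here) and Taylor-expand $\mc F_\ell$ to second order with remainder: $\mc F_\ell(m)-\mc F_\ell(m^\ell_z)=\langle \mc F_\ell'(m^\ell_z),v\rangle+\tfrac12 Q_z(v)+R(v)$, where $|R(v)|\le C\|v\|_{H^1}^2\|v\|_{L^\infty}\le C\|v\|^3_{H^1}$ using the polynomial growth of $V$ and Sobolev embedding in one dimension; the quartic term $\|v\|^4_{H^1}$ in the statement is just a crude bound on $\|v\|^3_{H^1}+\|v\|^2_{H^1}$ for all $v$ (small or large). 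The linear term is again a boundary term controlled by $\rme^{-4\ell}\|v\|_{H^1}$ exactly as in (ii) via \eqref{salto} (using $\zeta+\ell\ge\bar z+\ell$), and $Q_z(v)\le C\|v\|^2_{H^1}$ trivially since $V''(m^\ell_z)$ is bounded. Collecting terms gives the claimed inequality.

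\emph{Main obstacle.} The delicate point is the uniform (in $\ell\to+\infty$ and in $\zeta$ with $\zeta+\ell\ge\kappa$) spectral gap for the quadratic form $Q_\zeta$ restricted to $v\perp\partial_z m^\ell_\zeta$ in Part (ii). The difficulty is genuinely \emph{non-compact}: the domain $(-\ell,+\infty)$ is unbounded, $\ell\to+\infty$, and the potential $V''(m^\ell_\zeta)$ is close to $V''(\pm1)>0$ (a positive constant) only \emph{away} from the transition layer near $\zeta$; near $\zeta$ it dips and creates the zero mode $\bam'_\zeta$. One must carefully combine (a) coercivity at infinity from $V''(\pm1)>0$, (b) the finite-dimensional degeneracy handled by the single orthogonality constraint, and (c) the convergence $m^\ell_\zeta\to\bam_\zeta$ from Proposition~\ref{prop_ml}/Remark~\ref{35} to rule out escape of mass. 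A clean way is a contradiction argument: if no uniform $c>0$ works, pick $\ell_n\to+\infty$, $\zeta_n$ (with $\zeta_n+\ell_n\ge\kappa$), $v_n\perp\partial_z m^{\ell_n}_{\zeta_n}$, $\|v_n\|_{H^1}=1$, $Q_{\zeta_n}(v_n)\to 0$; translate so the layer sits at the origin; the $H^1$-bound plus the positivity of $V''$ at the wells prevents $v_n$ from concentrating at the moving boundary $-\ell_n$ or at $+\infty$; extract $v_n\rightharpoonup v$ solving $(-2\partial_{yy}+V''(\bam))v=0$ with $v\perp\bam'$, hence $v=0$; but then $Q(v_n)\to 0$ forces $\|v_n'\|^2_{L^2}\to 0$ and, using $V''(\pm1)>0$ on the tails, $\|v_n\|_{L^2}\to0$, contradicting $\|v_n\|_{H^1}=1$. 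This is where essentially all the work lies; the rest of the theorem is bookkeeping with the estimates already in Proposition~\ref{prop_ml}.
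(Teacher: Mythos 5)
Your proposal is essentially the paper's proof, part by part: a contradiction/compactness argument for (i) using the Modica--Mortola trick and a boundary-repulsion estimate to rule out a transition layer near $-\ell$; a second-order expansion $\mc F_\ell(m)=\mc F_\ell(m^\ell_\zeta)+I^1+I^2+I^3$ for (ii), with the linear term $I^1$ controlled by the jump $[\,(m^\ell_\zeta)'\,](\zeta)$ via \eqref{salto}, the cubic/quartic remainder $I^3$ by Sobolev embedding, and the quadratic form $I^2$ bounded below by $\|m-m^\ell_\zeta\|^2_{H^1}$ using a spectral gap for the linearization; and (iii) obtained by re-reading (ii) with the quadratic form bounded from above instead.

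The one place you diverge is in how the uniform $H^1$ coercivity of $I^2$ on the orthogonal complement is established. You propose a single compactness/contradiction argument that handles simultaneously the $L^2$ gap, the upgrade to $H^1$, and the fact that the constraint is orthogonality to $\partial_z m^\ell_\zeta$ rather than to the true ground state; the paper instead imports the $L^2$ gap for the Dirichlet operator $\mc H^\ell_\zeta=-\partial_{yy}+V''(\bam_\zeta)$ from \cite{bbb}, upgrades it to $H^1$ by a short contradiction argument, and then corrects for the mismatch between $\partial_z m^\ell_\zeta$ and the ground state $\Psi^\ell_\zeta$ using the explicit bounds \eqref{falso} and \eqref{mellezeta}. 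Your route is more self-contained but must be careful that $V''(\bam)$ dips below zero in the layer: to close your argument one should write $V''(\bam)=\big(V''(\bam)-V''(1)\big)+V''(1)$, note that the first bracket decays exponentially so the corresponding term vanishes along the weakly null sequence by local compactness, and then conclude $\|v_n\|_{H^1}\to 0$ from $V''(1)>0$; with that addition both arguments prove the same gap. A second small point: in (iii) the bound $|R(v)|\le C\|v\|^3_{H^1}$ you write is not quite right for the quartic potential — the exact remainder already contains both a cubic and a genuinely quartic piece (as in the paper's $I^3_\ell$), so the correct bound is $|R(v)|\le C(\|v\|^3_{H^1}+\|v\|^4_{H^1})$, and the $\|v\|^4_{H^1}$ in the statement is not merely a crude majorant; this does not affect the conclusion since $\|v\|^3\le\frac12(\|v\|^2+\|v\|^4)$, but the bookkeeping is cleaner if you keep the quartic term explicit.
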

We emphasize that while in statement (ii) of the above theorem $\zeta$ denotes the center of $m$, in statement (iii) $z$ is arbitrary.
\begin{remark}
\label{remgap}
As a consequence of \eqref{gapimprove} and Lemma~\ref{fucking}, there exist constants $\ell_0$, $\delta$, $\kappa$, and $C_0$ such that, for all $\ell\geq \ell_0$, if $m\in\mc T^\ell(\delta,\kappa)$, then the center $\zeta$ of $m$ is unique and 
\begin{equation}
\label{gap}
\|m-m^\ell_\zeta\|^2_{H^1}+\rme^{-4(\ell+\zeta)}\leq C_0\left[\mc F_\ell(m)-\frac83\right]\,.
\end{equation}
\end{remark}
\begin{proof}[Proof of Theorem~\ref{lemma0}] The proof is split into separate arguments. In the sequel we denote by $C$ a strictly positive constant, independent of $\ell$ and $\zeta$, whose numerical value may change from line to line. 

\smallskip\noindent
{\it Proof of statement (i), step 1.} Here we prove that for each $\delta>0$ there exist $\eta>0$ and $\ell_0>0$ such that if $\mc F_\ell(m)-\frac83<\eta$ for some $\ell\geq\ell_0$, then ${\rm dist}_{H^1}(m,{\mc M}^\ell)<\delta$. We argue by contradiction and assume that there exist $\delta_0>0$ and a sequence $m_\ell$ with 
\begin{equation}
\label{contra0}
\liminf_{\ell\to +\infty}\,{\rm dist}_{H^1}(m_\ell,{\mc M}^\ell)\geq\delta_0
\end{equation} 
such that
\begin{equation}
\label{contra1}
\limsup_{\ell\to+\infty}\,{\mc F}_\ell(m_\ell)\leq \frac83\,.
\end{equation}
Note that by \eqref{contra1} the function $m_\ell$ satisfies the boundary conditions \eqref{bc1}. We set $z_\ell=\inf\{y\colon m_\ell(y)=0\}$ and define $\widetilde m_\ell(y)=m_\ell(y+z_\ell)$, so that $\widetilde m_\ell(0)=0$. The boundedness of the energy implies that $\widetilde m_\ell$ converges, up to a subsequence, to some continuous function $m_0$, uniformly in compacts. We will show that $m_0=\bam$ and that $\widetilde m_\ell-\bam$ actually  converges to zero in $H^1$.

Given $\sigma>0$ we set
\begin{equation*}
\begin{split}
a_\ell&=\sup\{y<z_\ell\,:\ m_\ell(y)>-1+\sigma\}\,,\\
b_\ell&=\inf\{y>z_\ell\,:\ m_\ell(y)<1-\sigma\}\,.
\end{split}
\end{equation*}
The boundedness of $\int_{-\infty}^{+\infty}\!V(m_\ell)\,\rmd y$ implies that $b_\ell-a_\ell\leq C_\sigma$, for some constant $C_\sigma$ independent of $\ell$. This guarantees that  the energy of $\widetilde m_\ell$ does not escape to infinity. More precisely, using the Modica-Mortola trick,
\begin{equation*}
\begin{split}
\int_{-C_\sigma}^{C_\sigma}\! \left[|{\widetilde m}'_\ell|^2 +V(\widetilde m_\ell)\right]\,\rmd y &\geq2\int_{a_\ell-z_\ell}^{b_\ell-z_\ell}\! |{\widetilde m}'_\ell|\sqrt{V(\widetilde m_\ell)}\,\rmd y \\ &=2\int_{-1+\sigma}^{1-\sigma}\! \sqrt{V(m)}\,\rmd m=\frac83 -  4\sigma^2\left(1-\tfrac\sigma 3\right)\,,
\end{split}
\end{equation*}
we deduce, taking into account \eqref{contra1}, that
\begin{equation}
\label{limi}
\lim_{\ell\to+\infty}{\mc F}_\ell(m_\ell) =\frac83\,
\end{equation}
and thence
\begin{equation}
\label{complement}
\lim_{\sigma\to 0}\; \limsup_{\ell\to+\infty} \int\limits_{[-C_\sigma,C_\sigma]^\complement}\! \left[|{\widetilde m}'_\ell|^2 +V(\widetilde m_\ell)\right] \, \rmd y=0\,.
\end{equation}
Therefore, up to a subsequence, 
\begin{equation*}
\lim_{\ell\to+\infty} \int_{-\infty}^{+\infty}\!  V(\widetilde m_\ell)\,\rmd y = \int_{-\infty}^{+\infty}\!  V(m_0) \,\rmd y\,,
\end{equation*}
so that 
\begin{equation}
\label{secon}
\lim_{\ell\to+\infty} \int_{-\infty}^{+\infty}\!  |\widetilde m_\ell'|^2 \,\rmd y = \int_{-\infty}^{+\infty}\!  | m_0'|^2 \,\rmd y\,.  
\end{equation}
In particular, by \eqref{limi}, $\mc F(m_0)=\frac 83$. Since $m_0(0)=0$, by the uniqueness up to translations of the minimizer of $\mc F$, recall \eqref{unimc}, $m_0=\bam$. Now, using \eqref{complement} and the definition of $a_\ell$ and $b_\ell$, we get the convergence of $\widetilde m_\ell$ to $\bam$ in $L^2$. This, together with \eqref{secon} and Remark~\ref{35}, contradicts \eqref{contra0} and then concludes the proof of the step.

\smallskip\noindent
{\it Proof of statement (i), step 2.} Here we conclude the proof. Again we argue by contradiction and assume that there exist $\delta,\kappa>0$ and a sequence $m_\ell\notin\mc T^\ell(\delta,\kappa)$ such that $\mc F_\ell(m_\ell)\to\frac83$. By step 1 it is enough to consider the case when there exists a sequence $z_\ell\in(-\ell,-\ell+\kappa)$ such that $\|m_\ell-m^\ell_{z_\ell}\|_{H^1}<\delta$. This yields $\|m_\ell-m^\ell_{z_\ell}\|_{L^\infty}<C\delta$, and hence, if $z'_\ell$ is any zero of $m_\ell$, then $|m_{z_\ell}(z_\ell')|<C\delta$. By Proposition~\ref{prop_ml}, this implies $|z_\ell-z_\ell'|<C\delta$. On the other hand, it is easy to see that 
\begin{equation*}
\min\left\{{\mc F}(m)\colon m(a)=-1\, ,\ m(b)=0\right\}> \frac43\qquad \forall \, a, b\colon  -\infty<a<b<+\infty\,.
\end{equation*}
Therefore,
\begin{equation*}
\liminf_{\ell\to +\infty}\mc F_\ell(m_\ell)\geq \min\left\{{\mc F}(m)\colon m(0)=-1,\  m(\kappa+C\delta)=0\right\} +\frac43>\frac83\,.
\end{equation*}
This is a contradiction and concludes the proof of statement (i).

\smallskip\noindent
{\it Proof of statement (ii).} The uniqueness of the center, for $\delta_1$ small enough and $\kappa_1$ large enough, is stated in \cite[Proposition~3.1]{bbb}. The proof follows by standard implicit function argument \cite{bdp}. That proposition also guarantees that the center $\zeta$ satisfies the bound $\zeta>-\ell+\kappa-2\delta$.

Let $m\in\mc T^\ell(\delta,\kappa)$ and  $\zeta\geq-\ell+\kappa-2\delta$ be the unique center of $m$. Recalling that $m(y)-m^\ell_\zeta(y) = 0$ for $y\in (-\infty,-\ell]$, we decompose 
\begin{equation*}
\mc F_\ell(m)=\mc F_\ell(m_\zeta^\ell)+ I^1_\ell+ I^2_\ell + I^3_\ell\, ,
\end{equation*}
where
\begin{equation*}
\begin{split}
&I^1_\ell=\int_{-\infty}^{+\infty}\!  \big[2\partial_y m^{\ell}_\zeta\partial_y(m-m^\ell_\zeta)+ V'(m^{\ell}_\zeta)(m-m^\ell_\zeta)\big]\, \rmd y\,,\\
&I^2_\ell=\int_{-\infty}^{+\infty}\!  \Big[(\partial_y(m-m^\ell_\zeta))^2+\frac12 V''(m^{\ell}_\zeta)(m-m^\ell_\zeta)^2\Big]\, \rmd y\,,\\
&I^3_\ell=\int_{-\infty}^{+\infty}\!  \Big[\frac16 V'''(m^{\ell}_\zeta)(m-m^\ell_\zeta)^3+ \frac1{24} V''''(m^{\ell}_\zeta)(m-m^\ell_\zeta)^4\Big]\, \rmd y\,.
\end{split}
\end{equation*}
The proof will be achieved by analyzing in detail the quadratic form in $I_\ell^2$ and showing that it can be bounded from below by $\|m-m^\ell_\zeta\|^2_{H^1}$, while the other two terms will be bounded in absolute value. 

We first estimate $I_\ell^1$. By integration by parts, using \eqref{eulero} and \eqref{salto} we get
\begin{equation}
\label{i1}
\begin{split}
|I^1_\ell|&=2 \left|(m(\zeta)-m^\ell_\zeta(\zeta))\left[\partial_ym^\ell_\zeta\right](\zeta)\right|\\ &\leq C\rme^{-4(\ell+\zeta)}|m(\zeta)-m^\ell_\zeta(\zeta)|
\leq C \rme^{-4(\ell+\zeta)}\|m-m^\ell_\zeta\|_{H^1}\,,
\end{split}
\end{equation}
where we have used the Sobolev embedding. As for the term $I_\ell^2$, the application of the Sobolev embedding yields,
\begin{equation}
\label{i3}
|I^3_\ell|\leq C\left( \|m-m^\ell_\zeta\|_{H^1}^3+\|m-m^\ell_\zeta\|_{H^1}^4\right)\,.
\end{equation}
Finally, it remains to estimate $I^2_\ell$. We show that
\begin{equation}
\label{i2}
I_\ell^2\geq \frac 1C \|m-m_\zeta^\ell\|_{H^1}^2\,.
\end{equation}
We denote by $\mc H^\ell_\zeta$ the Schr\"odinger operator on $L^2((-\ell,+\infty))$ defined as
\begin{equation*}
\mc H^\ell_\zeta= -\frac{\rmd^2}{\rmd y^2}+ V''(\bam_\zeta)\,,
\end{equation*}
with domain $H^2((-\ell,+\infty))\cap H^1_0((-\ell,+\infty))$. In the sequel, we shall regard to $L^2((-\ell,+\infty))$ as a subset of $L^2$, by setting, for every function $\psi\in L^2((-\ell,+\infty))$,  $\psi(y) = 0$ if $y\in (-\infty,-\ell]$. Let also set $\varphi=\varphi_{\ell,\zeta}=m-m_\zeta^\ell$. With this notation we rewrite $I^2_\ell$ as the quadratic form,
\begin{equation}
\label{quadratic}
I_\ell^2=\langle \varphi, \mc H^\ell_\zeta\varphi\rangle_{L^2} + \langle \varphi, (V''(m^\ell_\zeta)-V''(\bam_\zeta))\varphi\rangle_{L^2}\,,
\end{equation}
where $\langle\cdot,\cdot\rangle_{L^2}$ denotes the $L^2$-inner product. By \eqref{stim1}, the second term on the right hand side of the above equality is bounded in absolute value by $C\rme^{-2(\ell+\zeta)} \|\varphi\|_{L^2}^2$.

It remains to estimate the first term of the right hand side of \eqref{quadratic}. As shown in \cite[Theorem~3.2]{bbb} the first eigenvalue $\lambda_\zeta^\ell>0$ of  the operator  $\mc H^\ell_\zeta$ is exponentially small as  $\ell\to +\infty$ while the remaining part of the spectrum is bounded away from  zero uniformly in $\ell$ and $\zeta$ (since $\ell+\zeta>\kappa_1-2\delta_1$). We denote by $\Psi_\zeta^\ell$ the eigenfunction corresponding to the eigenvalue $\lambda_\zeta^\ell$. From these results it follows that there exists a constant $g_1>0$, independent of $\ell$ and $\zeta$, such that for any  $\psi\in L^2((-\ell,+\infty))$, $\psi\perp\Psi_\zeta^\ell$, i.e., satisfying $\langle \psi,\Psi_\zeta^\ell\rangle_{L^2}=0$,
\begin{equation}
\label{gap1}
\langle \psi, \mc H^\ell_\zeta\psi\rangle_{L^2}\geq g_1 \langle \psi,\psi\rangle_{L^2}\,.
\end{equation}
We next improve the above bound with the $H^1$-norm. More precisely, we prove that there exists a constant $\bar g_1>0$ independent of $\ell$ and $\zeta$, such that
\begin{equation}
\label{gap2}
J_{\ell,\zeta}:= \inf_{\psi\perp \Psi_\zeta^\ell}\frac{\langle \psi, \mc H^\ell_\zeta\psi\rangle_{L^2}}{ \|\psi\|^2_{H^1}}\geq \bar g_1\,.
\end{equation}
Since $J_{\ell,\zeta}=J_{\ell+\zeta,0}$ it is enough to show that
\begin{equation}
\label{assurdo!}
\liminf_{\ell\to+\infty}\inf_{\psi\perp \Psi_0^\ell}\frac{\langle \psi, \mc H^\ell_0\psi\rangle_{L^2}}{ \|\psi\|^2_{H^1}}>0\,.
\end{equation}
We argue by contradiction. If \eqref{assurdo!} does not hold there exists a sequence $\psi_\ell$ with 
$\|\psi_\ell\|_{H^1}=1$ and $\psi_\ell\perp\Psi^\ell_0$ such that 
\begin{equation*}
\langle \psi_\ell, \mc H^\ell_0\psi_\ell\rangle_{L^2}=\int_{-\ell}^{+\infty}\!  \big(|\psi'_\ell|^2 + V''(\bam)\psi_\ell^2\big)\, \rmd y\to 0\,.
\end{equation*}
By \eqref{gap1} we necessarily have $\psi_\ell\to 0$ in $L^2$. In view of  the boundedness of $V''(\bam)$ the formula  above gives the required contradiction.

By writing $\varphi=\langle\varphi,\Psi_\zeta^\ell\rangle_{L^2} \Psi_\zeta^\ell +\varphi^\perp$, from \eqref{gap2} and Young's inequality we have, for each $\gamma>0$,
\begin{equation}
\begin{split}\label{gamma}
\langle \varphi, \mc H^\ell_\zeta\varphi\rangle_{L^2}&= \langle\varphi,\Psi_\zeta^\ell\rangle_{L^2} ^2\lambda_\zeta^\ell +\langle \varphi^\perp, \mc H^\ell_\zeta\varphi^\perp\rangle_{L^2}\\
&\geq \bar g_1 \left(\|\varphi\|^2_{H^1} -2 \langle\varphi,\Psi_\zeta^\ell\rangle_{L^2} \langle\varphi^\perp,\Psi_\zeta^\ell\rangle_{H^1}-\langle\varphi,\Psi_\zeta^\ell\rangle^2_{L^2} \|\Psi_\zeta^\ell\|^2_{H^1}  \right)
\\
&\geq \bar g_1 \left(\|\varphi\|^2_{H^1} -\gamma \langle\varphi^\perp,\Psi_\zeta^\ell\rangle_{H^1}^2-\langle\varphi,\Psi_\zeta^\ell\rangle^2_{L^2} (\|\Psi_\zeta^\ell\|^2_{H^1}+\gamma^{-1})  \right)\,.
\end{split}
\end{equation}
Since $\mc H^\ell_z \Psi_\zeta^\ell = \lambda^\ell_z \Psi_\zeta^\ell$, we easily deduce that $\|\Psi_\zeta^\ell\|^2_{H^1}$ is bounded uniformly in $\ell$ and $\zeta$. Moreover, by Schwarz's inequality and the orthogonality between $\varphi^\perp$ and $\Psi_\zeta^\ell$, choosing $\gamma$ small enough in \eqref{gamma}, we obtain 
\begin{equation}\label{gammapiccolo}
\langle \varphi, \mc H^\ell_\zeta\varphi\rangle_{L^2}\geq C  \left(\|\varphi\|^2_{H^1} -\langle\varphi,\Psi_\zeta^\ell\rangle^2_{L^2}   \right)\,.
\end{equation}
Using \eqref{ort} we get
\begin{equation*}
\begin{split}
|\langle\varphi,\Psi_\zeta^\ell\rangle_{L^2}|&
=\Big|\Big\langle\varphi,\Psi_\zeta^\ell+\frac{\partial_z m^\ell_\zeta}{\|\partial_z m^\ell_\zeta\|_{L^2}}\Big\rangle_{L^2}\Big| 
\\
&\leq \|\varphi\|_{H^1}\Big(\Big\|\Psi_\zeta^\ell -\frac{\bam'_\zeta}{\|\bam'_\zeta\|_{L^2}}\Big\|_{L^2} + \Big\|\frac{\bam'_\zeta}{\|\bam'_\zeta\|_{L^2}}+\frac{\partial_z m^\ell_\zeta}{\|\partial_z m^\ell_\zeta\|_{L^2}}\Big\|_{L^2}\Big)\,.
\end{split}
\end{equation*}
In order to bound the right hand side, we first claim that
\begin{equation}
\label{falso}
\Big\|\Psi_\zeta^\ell -\frac{\bam'_\zeta}{\|\bam'_\zeta\|_{L^2}}\Big\|_{L^2}\leq C \rme^{-2(\ell+\zeta)}\,.
\end{equation}
Indeed, a slightly weaker estimate is stated in \cite[Theorem~3.2]{bbb}. However, it is straightforward to modify the argument of the proof to get \eqref{falso}, see in particular \cite[page 336]{bbb}. The second term in the right hand side can be easily estimated using \eqref{mellezeta} which gives, taking into account that $m_\zeta^\ell(y)=\bam_\zeta(y)$ if $y>\zeta$ and that $\ell+\zeta>\kappa_1-2\delta_1$,
\begin{equation*}
\Big\|\frac{\bam'_\zeta}{\|\bam'_\zeta\|_{L^2}}+\frac{\partial_z m^\ell_\zeta}{\|\partial_z m^\ell_\zeta\|_{L^2}}\Big\|_{L^2}\leq C\kappa_1\rme^{-2\kappa_1}\,.
\end{equation*}
In conclusion, choosing $\kappa_1$ large enough, the previous bounds, together with \eqref{gammapiccolo}, give \eqref{i2} which completes the proof of statement (ii).

\smallskip\noindent
{\it Proof of statement (iii).} We notice that in the proof of statement (ii) the estimates of the terms $I_\ell^1$ and $I_\ell^3$ do not require $\zeta$ being the center of $m$, while $I_\ell^2$ can be easily estimated from above by the $H^1$-norm of $m-m^\ell_\zeta$.
\end{proof}

\section{Compactness}
\label{sec:4}

We are now ready to analyze the two dimensional functional. In this section we prove the compactness statement in Theorem~\ref{teorema1}. Let us consider a sequence $u_\eps$ in $X$ such that $F_\eps(u_\eps)\leq C_3$, namely
\begin{equation}
  \label{comp1}
  \int_0^{+\infty}\!\int_{-\ell_\eps}^{+\infty}\! (\partial_xu_\eps)^2 \,\rmd y\, \rmd x
  + \int_0^{+\infty} \! \eps^{-2}\left[\mc
    F_{\ell_\eps}(u_\eps(x,\cdot)) -\frac83\right]\, \rmd x\leq C_3\,, 
\end{equation}
where $\ell_\eps-\frac12 \log\eps^{-1}\to \alpha$, and $u_\eps$ satisfies the boundary conditions \eqref{bc} for some $w_\eps$ such that \eqref{datobordo} holds.
\begin{remark}
By Schwarz's inequality and the bound \eqref{comp1}, for any $x_1,x_2$ in $[0,+\infty)$,
\begin{equation}
\label{comp2}
\|u_\eps(x_1,\cdot)-u_\eps(x_2,\cdot)\|^2_{L^2}\leq C_3 |x_1-x_2|\,.
\end{equation}
\end{remark}
Given a sequence $M_\eps\to +\infty$ such that $M_\eps\eps^2\to 0$ as $\eps\to 0$, we define the set of \emph{good} points in $(0,+\infty)$ as
\begin{equation}\label{buoni}
  B_\eps=\Big\{ x\in (0,+\infty)\,:\ \mc
    F_{\ell_\eps}(u_\eps(x,\cdot))-\frac83\leq M_\eps\eps^2\Big\}\,. 
\end{equation}
The bound \eqref{comp1} yields $|B_\eps^\complement|\leq C_3/M_\eps$ (here $|B|$ is the Lebesgue measure of the Borel set $B\subset \bb R$). Moreover, since the bound \eqref{comp2} guarantees that the map $x\mapsto u_\eps(x,\cdot)$ is continuous from $(0,+\infty)$ to $\mc X$ (see the previous section for the definition of $\mc X$), the lower semicontinuity of $\mc F_\ell$ on $\mc X$ implies that the map $x\mapsto \mc F_{\ell_\eps}(u_\eps(x,\cdot))$ is lower semicontinuous and hence the set $B_\eps$ is closed.

We now show how to construct the sequence $\phi_\eps$. Recalling the assumption \eqref{datobordo} on the boundary datum, Theorem~\ref{lemma0} implies that if $\eps$ is small enough and $x\in B_\eps\cup\{0\}$ then there exists a unique center of $u_\eps(x,\cdot)$, that we denote by $\phi_\eps(x)$. Let us note that the function $\phi_\eps$ is measurable on $B_\eps$. This can be easily deduced by the continuity in the uniform topology of the map that to each function in the set $\mc T^\ell(\delta,\kappa)$ associates its center, see \cite[Proposition 3.2]{bdp}, and the measurability of the map $B_\eps\ni x\mapsto u(x,\cdot)$ with respect to the Borel $\sigma$-algebra associated to the uniform topology. 

Since $\ell_\eps-\frac12\log\eps^{-1}\to \alpha$, in view of \eqref{gap}, there exists a constant $C_4>0$, depending on $\alpha$, such that the following bounds hold,
\begin{equation}
\label{phi}
\begin{split}
  & \phi_\eps(x)\geq -\frac14\log(C_4M_\eps)
  \\
  &\|u_\eps(x,\cdot)-m^{\ell_\eps}(\cdot,\phi_\eps(x))
  \|^2_{H^1} 
  \leq C_0 M_\eps\eps^2
\end{split}
\qquad \forall\  x\in B_\eps
\end{equation}
and 
\begin{equation}
  \label{phi0}
  \lim_{\eps\to0}\phi_\eps(0)=0\,,
  \qquad
  \|u_\eps(0,\cdot)-m^{\ell_\eps}(\cdot,\phi_\eps(0))\|^2_{H^1}\leq  
  \eps \, \eta_\eps\,,
\end{equation}
where, in view of \eqref{datobordo},
\begin{equation}
\label{oe}
\eta_\eps:=\eps^{-1}\Big(\mc F_{\ell_\eps}(w_\eps)-\frac83\Big) \to 0\,.
\end{equation}
Since $B_\eps^\complement$ is a countable union of disjoint open intervals, we extend $\phi_\eps$ to a  function on $[0,+\infty)$ by defining it in each interval of $B_\eps^\complement$ as the affine interpolation of the values of $\phi_\eps$ at the endpoints. 

The compactness stated in Theorem~\ref{teorema1} is a consequence of the following two lemmata. Indeed, Lemma~\ref{holderphi} yields the precompactness of $\phi_\eps$ in the uniform topology, while Lemma~\ref{utilde} together with \eqref{stim1} imply \eqref{tilde0}.
\begin{lemma}
  \label{holderphi}
  Let $\phi_\eps$ be defined as above. Then there exists a positive constant $C_5$
  such that, for any  $x_1\,,x_2\in [0,+\infty)$,
  \begin{equation}\label{phi12}
    |\phi_\eps(x_1)-\phi_\eps(x_2)|\wedge|\phi_\eps(x_1)-\phi_\eps(x_2)|^2 
    \leq C_5 \Big(|x_1-x_2| +\frac1{M_\eps}+M_\eps\eps^2 +
      \eps\eta_\eps\Big)\,,
  \end{equation}
where $M_\eps$ is the sequence in \eqref{buoni} and $\eta_\eps$ is the sequence defined in \eqref{oe}.
\end{lemma}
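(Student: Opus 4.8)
The plan is to prove the Hölder-type estimate \eqref{phi12} first for good points $x_1,x_2\in B_\eps\cup\{0\}$ and then extend it to arbitrary points using the affine interpolation on $B_\eps^\complement$. For the first case, I would start from the identity $u_\eps(x_i,\cdot)=m^{\ell_\eps}_{\phi_\eps(x_i)}+\varphi_i$ with $\|\varphi_i\|_{H^1}^2\le C_0M_\eps\eps^2$ for $i$ corresponding to interior good points (and $\le\eps\eta_\eps$ for $x_i=0$), which is exactly the content of \eqref{phi} and \eqref{phi0}. The triangle inequality then gives
\begin{equation*}
\|m^{\ell_\eps}_{\phi_\eps(x_1)}-m^{\ell_\eps}_{\phi_\eps(x_2)}\|_{L^2}
\le \|u_\eps(x_1,\cdot)-u_\eps(x_2,\cdot)\|_{L^2}+\|\varphi_1\|_{L^2}+\|\varphi_2\|_{L^2}\,,
\end{equation*}
and the right-hand side is controlled by $C(|x_1-x_2|^{1/2}+ (M_\eps\eps^2)^{1/2}+(\eps\eta_\eps)^{1/2})$ via \eqref{comp2} together with the $H^1$-bounds on $\varphi_i$. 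The left-hand side is bounded below by $\frac1{\sqrt A}(|\phi_\eps(x_1)-\phi_\eps(x_2)|\wedge|\phi_\eps(x_1)-\phi_\eps(x_2)|^2)^{1/2}$ by the two-sided estimate \eqref{lip} in Proposition~\ref{prop_ml}, whose hypothesis $(z_1+\ell_\eps)\wedge(z_2+\ell_\eps)\ge 1$ is guaranteed for $\eps$ small by the lower bound $\phi_\eps(x)\ge-\frac14\log(C_4M_\eps)$ in \eqref{phi} (since $\ell_\eps\sim\frac12\log\eps^{-1}$ dominates $\frac14\log M_\eps$ because $M_\eps\eps^2\to0$). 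Squaring both sides yields \eqref{phi12} restricted to $x_1,x_2\in B_\eps\cup\{0\}$.

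Next I would handle the general case by reducing to the good points via the interpolation structure. Fix arbitrary $x_1<x_2$; each $x_i$ either lies in $B_\eps\cup\{0\}$ or in some open component $(a_i,b_i)$ of $B_\eps^\complement$, where $\phi_\eps$ is affine with endpoint values $\phi_\eps(a_i),\phi_\eps(b_i)$ at good points. On such a component, $|\phi_\eps(x_i)-\phi_\eps(a_i)|\le|\phi_\eps(b_i)-\phi_\eps(a_i)|$, and since $a_i,b_i$ are both good, the already-proven estimate bounds $|\phi_\eps(b_i)-\phi_\eps(a_i)|\wedge|\phi_\eps(b_i)-\phi_\eps(a_i)|^2$ by $C_5(|b_i-a_i|+\frac1{M_\eps}+M_\eps\eps^2+\eps\eta_\eps)$, and $|b_i-a_i|\le|B_\eps^\complement|\le C_3/M_\eps$. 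Combining the bound on the "good-to-good" increment with the affine control on each excursion into $B_\eps^\complement$ (using $|x_i-a_i|\le|b_i-a_i|$) and the triangle inequality over the decomposition $x_1\to a_1\text{ or }b_1\to\cdots\to x_2$ gives \eqref{phi12} up to adjusting the constant $C_5$; here one should keep in mind that the quantity $t\mapsto t\wedge t^2$ is not subadditive, so it is cleaner to argue on $\min(|\cdot|,|\cdot|^2)$ by splitting into the cases where the relevant increments are $\le1$ or $>1$, or equivalently to work with the equivalent quasi-distance $\rho(s,t)=|s-t|\wedge|s-t|^2$ and note $\rho$ satisfies a quasi-triangle inequality with a universal constant.

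The main obstacle I anticipate is precisely this lack of a clean triangle inequality for $t\wedge t^2$, compounded by the fact that \eqref{lip} only gives the two-sided comparison between $\|m^{\ell}_{z_1}-m^{\ell}_{z_2}\|^2_{L^2}$ and $|z_1-z_2|^2\wedge|z_1-z_2|$ — so the natural metric here is the $L^2$-distance on the manifold $\mc M^{\ell_\eps}$ rather than Euclidean distance on the $\phi$-values, and one must pass back and forth. The cleanest route is to phrase everything in terms of $D_\eps(x_1,x_2):=\|m^{\ell_\eps}_{\phi_\eps(x_1)}-m^{\ell_\eps}_{\phi_\eps(x_2)}\|_{L^2}$, which is a genuine metric, prove $D_\eps(x_1,x_2)^2\le C(|x_1-x_2|+M_\eps^{-1}+M_\eps\eps^2+\eps\eta_\eps)$ for all $x_1,x_2$ by the triangle inequality over the decomposition above (now legitimate), and only at the very end invoke \eqref{lip} to translate $D_\eps$ back into $|\phi_\eps(x_1)-\phi_\eps(x_2)|\wedge|\phi_\eps(x_1)-\phi_\eps(x_2)|^2$. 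One also needs to check that on a component $(a_i,b_i)$ of $B_\eps^\complement$ the affine interpolant stays in the regime where \eqref{lip} applies, i.e. that $\phi_\eps$ does not dip below $-\ell_\eps+1$; this follows since the interpolant lies between its endpoint values, which satisfy the lower bound in \eqref{phi}. With these points in hand the estimate \eqref{phi12} follows by collecting constants.
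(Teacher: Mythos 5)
Your proposal is correct and follows essentially the same route as the paper: for good points use \eqref{comp2}, the $H^1$-bounds in \eqref{phi}--\eqref{phi0}, and the two-sided estimate \eqref{lip}; then reduce the general case to good points via the affine interpolation and $|B_\eps^\complement|\le C_3/M_\eps$. The paper dispatches the reduction step in a single sentence, whereas you rightly flag that $t\mapsto t\wedge t^2$ is only quasi-subadditive, and your suggestion to route the triangle inequality through the genuine metric $D_\eps(x_1,x_2)=\|m^{\ell_\eps}_{\phi_\eps(x_1)}-m^{\ell_\eps}_{\phi_\eps(x_2)}\|_{L^2}$, invoking \eqref{lip} only at the very end, is a clean way to make that reduction airtight.
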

\begin{proof}
Since $\phi_\eps$ is affine outside $B_\eps$ and $|B_\eps^\complement|\leq C_3/M_\eps$, it is enough to prove that there exists $C_6>0$ such that for any $x_1,x_2\in B_\eps\cup\{0\}$,
\begin{equation}
\label{phi13}
    |\phi_\eps(x_1)-\phi_\eps(x_2)|\wedge|\phi_\eps(x_1)-\phi_\eps(x_2)|^2 
    \leq C_6 \big(|x_1-x_2| +M_\eps\eps^2 +
      \eps\eta_\eps\big)\, .
  \end{equation}
If $x_1,x_2\in B_\eps\cup\{0\}$ the bound \eqref{lip} implies 
  \begin{equation}
    \begin{split}
      &|\phi_\eps(x_1)-\phi_\eps(x_2)|\wedge|\phi_\eps(x_1)-\phi_\eps(x_2)|^2\\
      &\qquad\leq A
      \|m^{\ell_\eps}(\cdot,\phi_\eps(x_1))- 
      m^{\ell_\eps}(\cdot,\phi_\eps(x_2))\|^2_{L^2}\\  
      &\qquad \leq 2A\big(
        \|u_\eps(x_1,\cdot)-u_\eps(x_2,\cdot)\|^2_{L^2}+\|\widetilde
        u_\eps(x_1,\cdot)-\widetilde
        u_\eps(x_2,\cdot)\|^2_{L^2}\big),
\end{split}
\end{equation}
where $\widetilde u_\eps(x,y):= u_\eps(x,y)- m^{\ell_\eps}(y,\phi_\eps(x))$.  By using \eqref{comp2}, \eqref{phi}, and \eqref{phi0} the bound \eqref{phi13} follows.
\end{proof}
Recall that $m^\ell_z(\cdot)\equiv m^\ell(\cdot,z)$ is defined in \eqref{nome}.
\begin{lemma}
\label{utilde}
Let $u_\eps$ be a sequence satisfying the bound \eqref{comp1}, let $\phi_\eps$ be defined as above, and set $\widetilde u_\eps(x,y):= u_\eps(x,y)-m^{\ell_\eps}(y,\phi_\eps(x))$, $(x,y)\in [0,+\infty)\times \bb R$. For each $R>0$ the sequence $\widetilde u_\eps$ converges to $0$, as $\eps\to 0$, in $L^2((0,R)\times\bb R)$.
\end{lemma}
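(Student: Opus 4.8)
The plan is to control $\|\widetilde u_\eps\|^2_{L^2((0,R)\times\bb R)}$ by splitting the $x$-integral over the good set $B_\eps$ and its complement, and then to use the uniform $H^1$-estimate on good vertical sections together with the energy bound to handle the bad set. First I would write
\[
\int_0^R\!\int_{\bb R}\widetilde u_\eps(x,y)^2\,\rmd y\,\rmd x
= \int_{B_\eps\cap(0,R)}\!\|\widetilde u_\eps(x,\cdot)\|_{L^2}^2\,\rmd x
+ \int_{B_\eps^\complement\cap(0,R)}\!\|\widetilde u_\eps(x,\cdot)\|_{L^2}^2\,\rmd x\,.
\]
On the good set, the second bound in \eqref{phi} gives $\|\widetilde u_\eps(x,\cdot)\|_{L^2}^2\le\|u_\eps(x,\cdot)-m^{\ell_\eps}(\cdot,\phi_\eps(x))\|_{H^1}^2\le C_0 M_\eps\eps^2$; integrating over $(0,R)$ produces a term bounded by $C_0 R\, M_\eps\eps^2\to 0$, since $M_\eps\eps^2\to 0$ by the choice of $M_\eps$.

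For the bad set the difficulty is that we only know $|B_\eps^\complement|\le C_3/M_\eps$, which is small, but a priori $\|\widetilde u_\eps(x,\cdot)\|_{L^2}$ need not be small there. The key point is that on $B_\eps^\complement$ the integrand is nonetheless \emph{bounded} uniformly in $\eps$. Indeed $\widetilde u_\eps(x,\cdot)=\big(u_\eps(x,\cdot)-\chi\big)-\big(m^{\ell_\eps}(\cdot,\phi_\eps(x))-\chi\big)$, where $\phi_\eps$ on each component interval of $B_\eps^\complement$ is the affine interpolation of its endpoint values, which by Lemma~\ref{holderphi} stay in a bounded set (recall $\phi_\eps(x)\ge-\frac14\log(C_4 M_\eps)$ and $M_\eps\eps^2\to0$, $\eta_\eps\to0$, so over any fixed $(0,R)$ the oscillation of $\phi_\eps$ is controlled). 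Hence $m^{\ell_\eps}(\cdot,\phi_\eps(x))-\chi$ has $L^2$-norm bounded uniformly by \eqref{lip} or directly by the exponential decay of $m^\ell_z$ towards $\pm1$; and $\|u_\eps(x,\cdot)-\chi\|_{L^2}$ is controlled using \eqref{comp2} starting from any good reference point $x_0\in B_\eps$ (which exists for $\eps$ small since $|B_\eps^\complement|<R$), giving $\|u_\eps(x,\cdot)-\chi\|_{L^2}^2\le C(R)$ for $x\in(0,R)$. Therefore the bad-set contribution is at most $C(R)\,|B_\eps^\complement\cap(0,R)|\le C(R)\,C_3/M_\eps\to 0$.

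Combining the two estimates gives $\|\widetilde u_\eps\|^2_{L^2((0,R)\times\bb R)}\le C(R)\big(M_\eps\eps^2 + M_\eps^{-1}\big)\to 0$, which is the claim. The main obstacle is the second paragraph's uniform bound on $\|\widetilde u_\eps(x,\cdot)\|_{L^2}$ over the bad set: one must rule out that $u_\eps(x,\cdot)$, or the interpolated translation parameter $\phi_\eps(x)$, escapes to infinity on a component of $B_\eps^\complement$; this is where the Lipschitz-type control of Lemma~\ref{holderphi} together with the a priori lower bound on $\phi_\eps$ and the $L^2$-continuity \eqref{comp2} are essential.
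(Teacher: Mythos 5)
Your argument is correct and, at bottom, follows the same strategy as the paper: split the $x$-integral over $B_\eps$ and $B_\eps^\complement$, use \eqref{phi} on the good set, and then use the smallness of $|B_\eps^\complement|$. The one genuine (if modest) difference is how the bad-set contribution is dispatched. The paper derives a modulus of continuity for $x\mapsto\widetilde u_\eps(x,\cdot)$ in $L^2$, namely
\begin{equation*}
\|\widetilde u_\eps(x_1,\cdot)-\widetilde u_\eps(x_2,\cdot)\|^2_{L^2}\leq C_7 \Big(|x_1-x_2| +\tfrac1{M_\eps}+M_\eps\eps^2 + \eps \eta_\eps\Big)\,,
\end{equation*}
obtained from the triangle inequality together with \eqref{comp2} (for the $u_\eps$ part) and \eqref{lip} plus Lemma~\ref{holderphi} (for the $m^{\ell_\eps}_{\phi_\eps}$ part). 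Since every bad component of $B_\eps^\complement\cap(0,R)$ has length at most $C_3/M_\eps$, this shows $\|\widetilde u_\eps(x,\cdot)\|_{L^2}\to 0$ uniformly on $[0,R]$, and the conclusion follows at once. You instead establish only a uniform $O(1)$ bound for $\|\widetilde u_\eps(x,\cdot)\|_{L^2}$ on $(0,R)$ (via boundedness of $\phi_\eps$ on $[0,R]$ plus \eqref{comp2} from a good reference point) and let the vanishing measure $|B_\eps^\complement|\le C_3/M_\eps$ do the rest; slightly cruder, but it works and relies on the same ingredients. One small clean-up: the lower bound $\phi_\eps(x)\ge -\tfrac14\log(C_4 M_\eps)$ that you quote diverges to $-\infty$ as $\eps\to 0$ and so does not by itself give boundedness of $\phi_\eps$; the boundedness on $[0,R]$ that your argument actually needs comes — as you do also say — from the oscillation control of Lemma~\ref{holderphi} combined with $\phi_\eps(0)\to 0$ from \eqref{phi0}.
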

\begin{proof}
The estimate \eqref{phi} trivially implies that, for any $R>0$,
\begin{equation}
\label{tildebuoni}
\lim_{\eps\to 0}\int_{[0,R)\cap B_\eps}\!\int_{-\infty}^{+\infty}\!  |\widetilde u_\eps|^2 \,\rmd y\, \rmd x=0\,.
\end{equation}
By \eqref{comp2}, \eqref{lip} and Lemma~\ref{holderphi}, for a suitable constant $C_7>0$ we get, for any $x_1,x_2$ in $[0,+\infty)$, 
\begin{equation*}
\|\widetilde u_\eps(x_1,\cdot)-\widetilde u_\eps(x_2,\cdot)\|^2_{L^2}\leq C_7 \Big(|x_1-x_2| +\frac1{M_\eps}+M_\eps\eps^2 + \eps \eta_\eps\Big)\,.
\end{equation*}
This, together with \eqref{tildebuoni} and the fact that $|B_\eps^\complement|\leq C_3/M_\eps$, concludes the proof. 
\end{proof}

\section{$\Gamma$-convergence}
\label{sec:5}

In this section we conclude the proof of the main result by proving the $\Gamma$-convergence of the functionals $F_\eps$. 

\begin{proof}[Proof of Theorem~\ref{teorema1}: $\Gamma$-liminf.] The formal statement of the $\Gamma$-liminf inequality is the following. For each $u\in X$ and each sequence $u_\eps$ converging to $u$ in $X$, it holds $\liminf_{\eps\to 0} F_\eps(u_\eps)\leq F^\alpha(u)$. In view of the compactness result, the $\Gamma$-liminf is achieved once we show that, for each $\phi\in C([0,+\infty))$ and each sequence $u_\eps$ converging to $\bam_\phi$ in $X$,
\begin{equation}\label{lower1}
\liminf_{\eps\to0} F_\eps(u_\eps)\geq \mc G^\alpha(\phi)\,.
\end{equation}

Fix $\phi\in C([0,+\infty))$ and a sequence $u_\eps$ converging to $\bam_\phi$. Without loss of generality we can assume that $F_\eps(u_\eps)\leq C_8$. Therefore, in view of Lemmata~\ref{holderphi} and \ref{utilde}, by extracting, if necessary, a subsequence, there exists a  sequence $\phi_\eps$ converging to $\phi$ in $C([0,+\infty))$ such that 
$u_\eps=m^{\ell_\eps}(\cdot,\phi_\eps)+\widetilde u_\eps$, with $\widetilde u_\eps$ converging to zero in $L^2((0,R)\times\bb R)$, for any $R>0$. Let $B_\eps$ be the set of good points as defined in \eqref{buoni}. Then
\begin{equation}
\label{lowerbound}
\begin{split}
&F_\eps(u_\eps)\geq \int_0^{+\infty}\! \int_{-\infty}^{+\infty}\! (\partial_x u_\eps)^2\,\rmd y\, \rmd x +\!\!\int_{B_\eps}\!\!\!\!\eps^{-2}\Big[\mc F_{\ell_\eps}(u_\eps(x,\cdot))-\frac83\Big] \, \rmd x \\
&\quad= \int_0^{+\infty}\! \int_{-\infty}^{+\infty}\! (\partial_x u_\eps)^2 \,\rmd y\, \rmd  x+\!\!\int_{B_\eps}\!\!\!\!\!\eps^{-2}\Big[\mc F_{\ell_\eps}(m^{\ell_\eps}(\cdot,\phi_\eps(x)))-\frac83\Big] \, \rmd x +\mc R_\eps\,.
\end{split}
\end{equation}
Since, for each $R>0$, $u_\eps\to \bam_\phi$ in $L^2((0,R)\times\bb R)$, the lower semicontinuity of the map $u\mapsto \|\partial_x u\|^2_{L^2((0,R)\times \bb R)}$ with respect to the $L^2$-convergence gives 
\begin{equation}\label{step1}
\begin{split}
& \liminf_{\eps\to 0}\int_0^R\! \int_{-\infty}^{+\infty}\! (\partial_x u_\eps(x,y))^2 \,\rmd x \,\rmd y \\ & \qquad \geq \int_0^R\!\int_{-\infty}^{+\infty}\! (\partial_x \bam_{\phi(x)}(y))^2\,\rmd y\, \rmd  x=\frac43\int_0^R\! \phi'(x)^2 \,\rmd x \,.
\end{split}
\end{equation}
The estimate of the second term on the right hand side of \eqref{lowerbound} is a direct consequence of Lemma~\ref{fucking}. Indeed, since $\epsilon^{-2} e^{-4\ell_\epsilon} \to e^{-4\alpha}$, by Fatou's Lemma and the fact that $|B_\eps^\complement|\to 0$, we get
\begin{equation}\label{step2}
\liminf_{\eps\to 0}\int\limits_{B_\eps\cap
  (0,R)}\!\eps^{-2}\Big[\mc
  F_{\ell_\eps}(m^{\ell_\eps}(\cdot,\phi_\eps(x)))-\frac83\Big] \, \rmd x\geq 
16\,e^{-4\alpha}\, 
\int_0^R\!\rme^{-4\phi(x)}\,\rmd x\,, 
\end{equation}
for any $R>0$.

Finally, we need to estimate  $\mc R_\eps$ as defined in \eqref{lowerbound}, i.e.,
\begin{equation*}
\mc R_\eps= \!\!\int_{B_\eps}\!\!\!\!\!\eps^{-2}\left[\mc
  F_{\ell_\eps}(u_\eps(x,\cdot))-\mc F_{\ell_\eps}(m^{\ell_\eps}(\cdot,\phi_\eps(x)))\right] \, \rmd x\,.
\end{equation*}
By \eqref{phi}, for any $x\in B_\eps$,
\begin{equation*}
\rme^{-4\phi_\eps(x)}
\|\widetilde u_\eps(x,\cdot)\|_{H^1}\leq C_4C_0^{\frac12} M_\eps^\frac32\eps\,.
\end{equation*}
Thus, if we further choose $M_\eps$ such that $M_\eps^3\eps^2\to 0$ as $\eps\to 0$, from \eqref{gapimprove} we get
\begin{equation}
\label{reps}
\liminf_{\eps\to 0}\mc R_\eps\geq \frac{1}{C_1}
\lim_{\eps\to0}\int_{B_\eps}\!\eps^{-2}\|\widetilde u_\eps(x,\cdot)\|_{H^1}^2\, \rmd x\, .
\end{equation}
The bound \eqref{lower1} follows by \eqref{lowerbound}, \eqref{step1}, \eqref{step2}, and \eqref{reps}.
\end{proof}
We note that the previous arguments show that if the energy of the sequence $u_\eps$ converges to $\mc G^\alpha(\phi)$ then $u_\eps(x,\cdot)$, $x\in B_\eps$, is actually close in $H^1(\bb R)$ topology to the ``right'' one dimensional profile, with an explicit control on the norm. The precise statement is given in the following remark.
\begin{remark}
\label{caporetto}
Take a sequence $u_\eps$ with $F_\eps(u_\eps)\le C_9$ and decompose $u_\eps$ as $u_\eps=m^{\ell_\eps}(\cdot,\phi_\eps)+\widetilde u_\eps$, where $\phi_\eps$ is the sequence constructed in Section~\ref{sec:4}. If $u_\eps\to\bam_\phi$, for some $\phi\in C([0,+\infty))$, and satisfies
\begin{equation*}
\liminf_{\eps\to0} F_\eps(u_\eps)=\mc G^\alpha(\phi) < +\infty\,, 
\end{equation*}
then from \eqref{reps} we easily deduce that
\begin{equation*}  
\lim_{\eps\to0} \int_{B_\epsilon}
  \!\eps^{-2}\|\widetilde u_\eps(x,\cdot)\|_{H^1}^2\,\rmd x=0\,.
\end{equation*}
\end{remark}
\begin{proof}[Proof of Theorem~\ref{teorema1}: $\Gamma$-limsup]
We now show that for any function $u\in X$ of the form $u=\bam_\phi$, with $\phi\in C([0,+\infty))$, we can construct a sequence $\bar u_\eps$ such that
\begin{equation}
\label{optimal}
\liminf_{\eps\to0} F_\eps(\bar u_\eps)=\mc G^\alpha(\phi)\,.
\end{equation}
We observe that for each $\phi\in C([0,+\infty))$ such that $\phi(0)=0$ and $\mc G^\alpha(\phi)<+\infty$ there exists a sequence $\phi_n$, with $\text{supp}\, \phi_n\subset (n^{-1},+\infty)$ and $\phi_n\ge -n$, converging to $\phi$ and satisfying $\lim_n\mc G^\alpha(\phi_n) =\mc G^\alpha(\phi) $. By standard properties of the $\Gamma$-limsup, see e.g.~\cite[Remark~1.29]{marito1}, it is therefore enough to construct the recovery sequence for $\phi\in C([0,+\infty))$ bounded from below and with $\text{supp}\, \phi\subset (\delta,+\infty)$, $\delta>0$.

Let $\zeta_\epsilon$ be a center of the boundary condition $w_\epsilon$. In view of \eqref{datobordo} and Theorem~\ref{lemma0}, $\zeta_\epsilon$ is in fact the unique center of $w_\epsilon$. Moreover, by \eqref{stim1}, the real sequence $\zeta_\epsilon$ converges to zero as $\epsilon\to 0$. By redefining $\ell_\epsilon$ we can thus assume, and do it now, that $\zeta_\epsilon=0$.

We claim that the following sequence does the job,
\begin{equation}
\label{recovery}
\bar u_\eps(x,y):=
\begin{cases}
m^{\ell_\eps}(y,\phi(x)) & 
\hbox{ if }  
(x,y)\in [\epsilon,+\infty)\times \bb R \,,
\\
m^{\ell_\eps}\big(y,0 \big) 
+ \frac{\epsilon-x}{\epsilon} \widetilde w_\eps(y) 
& 
\hbox{ if }  
(x,y)\in [0,\epsilon)\times \bb R \,,
\end{cases}
\end{equation}
where $\widetilde w_\eps:=w_\eps-m^{\ell_\eps}_0$.  

In the sequel we  use the notation $F_\eps(\cdot,A)$ for the localization of the functional $F_\eps$ on the set $A\subset (0,+\infty)\times \bb R$. Since $m^\ell_z=m_0^{\ell+z}$, by Lemma~\ref{fucking} it follows that for each $\bar z\in \bb R$ there exists $\bar\ell>0$ such that
\begin{equation*}
\rme^{-4\ell}\left[\mc F_\ell(m^\ell_z)-\frac83\right]\le 17\,\rme^{-4z}\qquad \forall\, z\in [\bar z,+\infty) \quad \forall\, \ell>\bar\ell\,.
\end{equation*}
Since we assumed $\phi$ to be bounded from below, by Lemma~\ref{fucking}, dominated convergence, and \eqref{dz}, we deduce
\begin{equation}
\label{dadelta}
\lim_{\eps\to 0} F_\eps( \bar u_\eps ,(\delta,
+\infty)\times(-\ell_\epsilon,+\infty))
= \int_{\delta}^{+\infty}\! \Big[ \frac43 \phi'(x)^2 
  + 16\,e^{-4\alpha}  \,\rme^{-4 \phi(x)} \Big] \,\rmd x \,.
\end{equation}
We now show that
\begin{equation}
\label{raccordo}
\lim_{\eps\to 0} F_\eps(\bar u_\eps,
(0,\delta)\times(-\ell_\epsilon,+\infty))= 16 \, \rme^{-4\alpha}\,\delta\,.
\end{equation}
Since $\text{supp}\, \phi\subset (\delta,+\infty)$, \eqref{optimal} is a straightforward consequence of \eqref{dadelta} and \eqref{raccordo}.
 
To conclude, we are left with the proof of \eqref{raccordo}. As it follows from \eqref{datobordo} and \eqref{gapimprove}, $\lim_{\epsilon\to 0} \eps^{-1}\|\widetilde w_\eps\|^2_{H^1}=0$ and therefore
\begin{equation*}
\lim_{\eps\to 0}\int_0^\delta
\!\int_{-\ell_\epsilon}^{+\infty}\!|\partial _x \bar u_\eps|^2\,\rmd y\, \rmd x=
\lim_{\eps\to 0}
\int_0^\epsilon\!\int_{-\ell_\epsilon}^{+\infty}\! \epsilon^{-2}
|\widetilde w_\epsilon(y)|^2\,\rmd y\, \rmd x=0\,.
\end{equation*}
On the other hand, since $\phi(x) =0 $ for $x\in [0,\delta]$, $\bar u_\eps(x,\cdot) = m^{\ell_\eps}_0(\cdot)$ in $(\eps,\delta)$, then
\begin{equation*}
\begin{split}
  & \int_0^\delta \!\eps^{-2}\Big[\mc
    F_\eps(\bar u_\eps(x,\cdot) )-\frac83\Big]\, \rmd x
  \\ & \qquad =  \int_0^\delta\!\eps^{-2}\Big[ 
    \mc F_\eps(m^{\ell_\eps}_0) -\frac83 \Big]\, \rmd x + \int_0^\epsilon\!\eps^{-2}\Big[\mc
    F_\eps(\bar u_\eps(x,\cdot)  )- \mc F_\eps(m^{\ell_\eps}_0) \Big]\, \rmd x\,.
\end{split}
\end{equation*}
As $\ell_\epsilon-\frac12\log\epsilon^{-1}\to \alpha$, by Lemma~\ref{fucking},
\begin{equation*}
\limsup_{\epsilon\to 0} \:
  \int_0^\delta\!\eps^{-2}\Big[ 
    \mc F_\eps(m^{\ell_\eps}_0) -\frac83 \Big]\, \rmd x
  =16 \, \rme^{-4\alpha}\,\delta\,.
\end{equation*}
As noted before $\lim_{\epsilon\to 0} \eps^{-1}\|\widetilde w_\eps\|^2_{H^1}=0$; therefore by  Theorem~\ref{lemma0}, item (iii),
\begin{equation*}
  \lim_{\epsilon\to 0} \int_0^\epsilon\!\eps^{-2}\left[\mc
    F_\eps(\bar u_\eps(x,\cdot))- \mc F_\eps(m^{\ell_\eps}_0) \right]\, \rmd x
  =0,
\end{equation*}
which completes the proof of \eqref{raccordo}. 
\end{proof}

\appendix
\section{Sharp estimates on the constrained minimizer} 
\label{sec:a}

In this appendix we prove the sharp estimates concerning $m_z^\ell$ and its convergence to $\bam_z$. We regard the boundary value problem \eqref{eulero} as a one dimensional Newtonian system with potential $-V$ and mass equal to two. Accordingly, the space variable $y$ is interpreted as the time and denoted by $t$. 
 
\begin{proof}[Proof of Proposition~\protect\ref{prop_ml}]
Given $T>0$, we denote by $m_T(t)$, $t\in [-T,0]$, the solution to the boundary value problem
\begin{equation}
\label{eulero1}
\begin{cases}
-2m'' +V'(m)=0 & \hbox{in}\ (-T,0)\,,\\
m(-T)=-1\,,\ m(0)=0\,.
\end{cases}
\end{equation}
Integrating \eqref{eulero1} by using the conservation of the Newtonian energy, we get that $m_T(t)$ is the strictly increasing function on $[-T,0]$ such that
\begin{equation}
\label{mt}
-t = \int_{m_T(t)}^0\!\frac{\rmd a}{\sqrt{V(a)+E_T}}\qquad\forall\, t\in [-T,0]\,,
\end{equation}
where $E_T$ is implicitly defined by the condition,
\begin{equation}
\label{et}
T = \int_{-1}^0\!\frac{\rmd a}{\sqrt{V(a)+E_T}}\,.
\end{equation}

In the sequel we denote by $C$ a strictly positive constant, independent of $T$, whose numerical value may change from line to line. By \cite[Lemma~A.1]{bbbeq}, 
\begin{equation}
\label{eti}
\lim_{T\to\infty} \rme^{4T}\, E_T = 64
\end{equation} 
and 
\begin{equation}
\label{mtb}
\sup_{t\in (-T,0)} \big| m_T(t) - \bam(t)\big| \le C \rme^{-2T}\qquad\forall\,T\ge 1\,.
\end{equation} 
We now observe that, for any $y\in (-\ell,z)$, 
\begin{equation}
\label{mtt}
\begin{split}
m^\ell_z(y) & = m_{\ell+z}(y-z)\,, \\ 
\partial_z m^\ell_z(y) & = \partial_T m_{\ell+z}(y-z) - m_{\ell+z}'(y-z)\,, \\ \partial_{zz} m^\ell_z(y) & = \partial_{TT} m_{\ell+z}(y-z) - 2\partial_T m_{\ell+z}'(y-z) + m_{\ell+z}''(y-z)\,.
\end{split}
\end{equation}
The bound \eqref{stim1} follows by \eqref{mtb}. We next show  
\begin{eqnarray}
\label{mtb1} && \!\!\!\!\!\!\!\!\!\!\!\!\sup_{t\in (-T,0)} \big| m_T'(t) - \bam'(t)\big| \le C \rme^{-2T}\qquad\forall\,T\ge 1\,, \\ \label{mtb2} && \!\!\!\!\!\!\!\!\!\!\!\!\sup_{t\in (-T,0)} \big| m_T''(t) - \bam''(t)\big| \le C \rme^{-2T}\qquad\forall\,T\ge 1\,, \\ \label{mtb3} && \!\!\!\!\!\!\!\!\!\!\!\!\sup_{t\in (-T,0)} \left\{\big|\partial_Tm_T(t)\big| +\big|\partial_Tm_T'(t)\big| + \big|\partial_{TT}m_T(t)\big| \right\} \le C\rme^{-2T}\qquad\forall\,T\ge 1\,, \qquad
\end{eqnarray}
which imply the estimates \eqref{mellezeta} and \eqref{mellezeta2}.

\medskip
\noindent{\it Proof of (\ref{mtb1})}. Since $\bam'(t) = \sqrt{V(\bam(t))}$ , $m_T'(t) = \sqrt{V(m_T(t)) +E_T}$, and $\bam(0)=m_T(0)=0$, we have 
\begin{equation*}
-1<m_T(t)<\bam(t)<0 \qquad\forall\, t\in(-T,0)\,.
\end{equation*}
Hence, for any $t\in(-T,0)$,
\begin{equation}
\label{mt1}
\big|m_T'(t) - \bam'(t)\big| \le \sqrt{E_T} + \frac{V(\bam(t)) - V(m_T(t))}{\sqrt{V(\bam(t))}} \le \sqrt{E_T} + 4 \big|m_T(t) - \bam(t)\big|\,,
\end{equation}
where, in the last inequality, we used that, by the explicit expression \eqref{p1} of $V$, $V(b)-V(a) \le 4\sqrt{V(b)}(b-a)$ for $-1<a<b<0$. The bound \eqref{mtb1} now follows by \eqref{eti}, \eqref{mtb}, and \eqref{mt1}.

\medskip
\noindent{\it Proof of (\ref{mtb2})}. Since $m_T''(t) - \bam''(t) = V'(m_T(t)) - V'(\bam(t))$ and $|V''(a)|\le 16$ for $-1<a<0$, the bound \eqref{mtb2} is an immediate consequence of \eqref{mtb}.

\medskip
\noindent{\it Proof of (\ref{mtb3})}. Taking the derivatives with respect to the variable $T$ in the identities \eqref{mt}, \eqref{et} and $m_T'(t) = \sqrt{V(m_T(t)) +E_T}$, we compute,
\begin{equation}
\label{com}
\begin{split}
E_T' & := \frac{\rmd E_T}{\rmd T} = - 2\left[\int_{-1}^0\!\frac{\rmd a}{\big[V(a)+E_T\big]^{3/2}}\right]^{-1}\,, \\ E_T'' & := \frac{\rmd^2 E_T}{\rmd T^2} = -\frac 34 (E_T')^3 \int_{-1}^0\!\frac{\rmd a}{\big[V(a)+E_T\big]^{5/2}} \,,  \\ \partial_T m_T(t) & = - \frac{E_T'}2 \sqrt{V(m_T(t)) +E_T} \int_{m_T(t)}^0\!\frac{\rmd a}{\big[V(a)+E_T\big]^{3/2}}\,, \\ \partial_{TT} m_T(t) & = \left[\frac 12 \frac{V'(m_T(t))\, \partial_T m_T(t)+ 2E_T'}{V(m_T(t)) +E_T} + \frac{E_T''}{E_T'} \right] \partial_T m_T(t) \, \\ & \phantom{=} + \frac34 (E_T')^2  \sqrt{V(m_T(t)) +E_T}\int_{m_T(t)}^0\!\frac{\rmd a}{\big[V(a)+E_T\big]^{5/2}} \,, \\ \partial_{T} m_T'(t) & = \frac 12 \frac{V'(m_T(t))\, \partial_T m_T(t)+ E_T'}{\sqrt{V(m_T(t)) +E_T}}\,.
\end{split}
\end{equation}
By the change of variable $b=1+a$ it is straightforward to check that, for any integer $n\ge1$,
\begin{equation*}
\int_0^1\!\frac{\rmd b}{\big[4b^2+E_T\big]^{n/2}} \le
\int_{-1}^0\!\frac{\rmd a}{\big[V(a)+E_T\big]^{n/2}} \le
\int_0^1\!\frac{\rmd b}{\big[b^2+E_T\big]^{n/2}}\,.
\end{equation*}
Therefore,
\begin{equation}
\label{c2}
\frac{g_n(E_T)}{C} \le \int_{-1}^0\!\frac{\rmd a}{\big[V(a)+E_T\big]^{n/2}} \le C \, g_n(E_T) \qquad \forall\, T\ge 1\,,
\end{equation}
where
\begin{equation*}
g_n(E_T) = \left\{\begin{array}{ll} |\log E_T|\,, & \textrm{ if }n=1\,, \\ E_T^{(1-n)/2} & \textrm{ if }n=3,5\,. \end{array}\right.
\end{equation*}
By \eqref{c2} and \eqref{com} it follows that 
\begin{equation}
\label{c3}
\frac{E_T}{C} \le |E_T'| \le C\, E_T, \qquad |E_T''| \le C\, E_T \qquad \forall\, T\ge 1\,.
\end{equation}
Since $V(a)\ge V(m_T(t))$ for $a\in [m_T(t)),0]$, using \eqref{c2} and \eqref{c3}, from \eqref{com}  we get
\begin{equation}
\label{a5}
\big|\partial_T m_T(t) \big| \le  \frac{|E_T'|}2 \int_{m_T(t)}^0\!\frac{\rmd a}{\big[V(a)+E_T\big]^{1/2}} \le C\, E_T |\log E_T|\,. 
\end{equation}
Analogously, using also the explicit form \eqref{p1} of $V$, 
\begin{equation}
\label{a6}
\big|\partial_{TT} m_T(t) \big| \le \left[\frac{2\big|\partial_T m_T(t)\big|+ |E_T'|}{E_T} + \frac{|E_T''|}{|E_T'|}\right] \big|\partial_T m_T(t)\big| + C\frac{|E_T'|^2}{E_T} \le C E_T\,.
\end{equation}
Finally,
\begin{equation}
\label{a7}
\big|\partial_T m_T'(t) \big| \le \frac 12 \frac{4\big|\partial_T m_T(t)\big|+ |E_T'|}{\sqrt{E_T}}  \le C \,\sqrt{E_T}\,.
\end{equation}
The bound \eqref{mtb3} now follows by \eqref{eti}, \eqref{a5}, \eqref{a6}, and \eqref{a7}.

\medskip
\noindent{\it Proof of (\ref{salto})}. Recall that $m^\ell_z(y) = \bam_z(y)$ for $y\ge z$, whence $(m^\ell_z)'(y) + \partial_z m^\ell_z(y) = 0$ for $y> z$. Therefore, by \eqref{mtt} and \eqref{com},
\begin{equation*}
\left[(m^\ell_z)'\right](z) +  \left[\partial_z m^\ell_z\right](z) = 
\lim_{y\uparrow z} \left\{(m^\ell_z)'(y) + \partial_z m^\ell_z(y) \right\} = \lim_{y\uparrow z}\partial_T m_{T}(y-z){\Big|_{T=\ell+z}}= 0\,.
\end{equation*}
On the other hand, since $\bam'(0)=1$, 
\begin{equation*}
\left|\left[(m^\ell_z)'\right](z)\right| = \left| 1 - \sqrt{V(m_{\ell+z}(0))+E_{\ell+z}} \right| = \left| 1 - \sqrt{1+E_{\ell+z}}\right| \le E_{\ell+z}\,.
\end{equation*}
By \eqref{eti}, this concludes the proof of \eqref{salto}.

\medskip
\noindent{\it Proof of (\ref{lip})}. Without loss of generality we assume $z_1<z_2$. Since
\begin{equation*}
m^\ell_{z_2}(y) < 0 < m^\ell_{z_1}(y) = \bam_{z_1}(y)\qquad\forall\, y\in (z_1,z_2)
\end{equation*}
then
\begin{equation}\label{dalbasso}
\begin{split}
\|m^\ell_{z_1}-m^\ell_{z_2}\|^2_{L^2} &\ge \int_{z_2}^{+\infty}\! |\bam_{z_1}(y)-\bam_{z_2}(y)|^2\,\rmd y\\
&= \int_{0}^{+\infty}\! |\bam(y+z)-\bam(y)|^2\,\rmd y=: G(z)\,, 
\end{split}
\end{equation}
with $z=z_2-z_1$. By differentiating, 
\begin{equation*}
G'(z)=2\int_0^{+\infty}\!\bam'(y+z)(\bam(y+z)-\bam(y))\,\rmd y\,,
\end{equation*}
whence $G'(0)=0$ and, since $\bam$ is strictly increasing, $G'$ is strictly increasing. Moreover,
 \begin{equation*}
G''(0)=2\int_0^{+\infty}\!|\bam'(y)|^2\,\rmd y=\frac43\,.
\end{equation*}
The above properties of the function $G$ imply $G(z) \ge C\, z^2 \wedge z$ for any $z\ge 0$. In view of \eqref{dalbasso}, this yields the lower bound of the estimate \eqref{lip}.

To prove the upper bound we analyze separately the cases $z_2-z_1\le 1$ and $z_2-z_1> 1$. In the first case, we use the Schwarz's inequality and \eqref{mellezeta2} to write
\begin{equation*}
\begin{split}
\|m^\ell_{z_1}-m^\ell_{z_2}\|^2_{L^2} & = \left\|\int_{z_1}^{z_2}\!\partial_z m^\ell_z \,\rmd z \right\|^2_{L^2} \le (z_2-z_1) \int_{z_1}^{z_2}\!\left\|\partial_z m^\ell_z \right\|^2_{L^2}\,\rmd z \\ & \le \left(2\|\bam'\|^2_{L^2} + 2A^2(\ell+z_2) \rme^{-4(\ell+z_1)}\right) (z_2-z_1)^2\,.
\end{split}
\end{equation*}
In the second case, recalling the definition of $m^\ell_{z_i}$ outside $(-\ell,z_i)$ and using  \eqref{stim1}, we have
\begin{equation*}
\begin{split}
\|m^\ell_{z_1}-m^\ell_{z_2}\|^2_{L^2} & \le \int_{-\ell}^{+\infty}\!2\big[\bam_{z_1}(y)-\bam_{z_2}(y)\big]^2 \,\rmd y + 8A^2(\ell+z_2) \rme^{-4(\ell+z_1)} \\ & \le C\, (z_2-z_1)\,.
\end{split}
\end{equation*}
The proposition is thus proved.
\end{proof}

\end{document}